\title{Classifying covering types in homotopy type theory}
\author{Samuel Mimram}{LIX, CNRS, École polytechnique, Institut Polytechnique de Paris, 91120 Palaiseau, France}{samuel.mimram@polytechnique.edu}{https://orcid.org/0000-0002-0767-2569}{}
\author{Émile Oleon}{LIX, CNRS, École polytechnique, Institut Polytechnique de Paris, 91120 Palaiseau, France}{emile.oleon@polytechnique.edu}{https://orcid.org/0009-0001-8398-2577}{}
\authorrunning{S. Mimram and É. Oleon}
\keywords{homotopy type theory, covering, Galois correspondence} 
\begin{document}
\maketitle

\begin{abstract}
  Covering spaces are a fundamental tool in algebraic topology because of the close relationship they bear with the fundamental groups of spaces. Indeed, they are in correspondence with the subgroups of the fundamental group: this is known as the \emph{Galois correspondence}. In particular, the covering space corresponding to the trivial group is the universal covering, which is a ``1-connected'' variant of the original space, in the sense that it has the same homotopy groups, except for the first one which is trivial. In this article, we formalize this correspondence in homotopy type theory, a variant of Martin-Löf type theory in which types can be interpreted as spaces (up to homotopy). Along the way, we develop an $n$-dimensional generalization of covering spaces. Moreover, in order to demonstrate the applicability of our approach, we formally classify the coverings of lens spaces and explain how to construct the Poincaré homology sphere.

\end{abstract}


\section{Introduction}
The notion of covering space is a fundamental tool in algebraic topology. It provides a canonical way to remove the low-dimensional homotopy structure of a space (the universal covering has a trivial fundamental group) and it bears a close relationship with the fundamental group: coverings are classified by subgroups of the fundamental group of the original space, which is known as the \emph{Galois correspondence}.
The setting of homotopy type theory~\cite{hottbook} allows one to perform geometric constructions in a synthetic way: all constructions on types correspond to manipulations of spaces, and are guaranteed to be invariant under homotopy of spaces by construction. It is thus natural to expect that the definition of covering space and associated properties can be developed in this framework, and we explain here that this is indeed the case. The notion of (universal) covering of a type was first introduced by Harper and Favonia in~\cite{harper2018covering}, and the Galois correspondence was recently independently shown by Wemmenhove, Manea, and Portegies~\cite{wemmenhove2024classification}. Here, we develop further the theory of covering spaces, by explaining their relationship with the connected/truncated factorization, generalizing to $n$-coverings (we recover the usual case by setting $n=0$), and computing their homotopy groups. Compared to~\cite{wemmenhove2024classification}, our proof of the Galois correspondence departs from the traditional one in algebraic topology~\cite{hatcher}, providing arguments which are shorter, more conceptual, and should be amenable to generalizations (in particular, we leave the general classification of $n$-coverings for future work, handling only the case $n=0$ here). As a novel, important and non-trivial application of our constructions, we provide a classification of the covering spaces of lens spaces, which play an important role in algebraic topology because they provide deloopings of cyclic groups. We also more briefly present other applications such as the construction of important spaces due to Poincaré (the hypercubical manifold and the homology sphere) as quotients of coherent actions of their fundamental group on their universal covering.

\subparagraph{Plan of the paper.}
After recalling basic notations and concepts in homotopy type theory (\cref{hott}), we define and study higher covering types (\cref{higher-cov}), and prove the classification of covering types (\cref{galois-correspondence}). Finally, as an application, we classify the covering types of lens spaces, construct the Poincaré homology sphere (\cref{applications}) and conclude (\cref{conclusion}).

\ifx\authoranonymous\relax\else
\subparagraph{Acknowledgments.}
We would like to thank Hugo Salou, who has formalized our proof of \cref{covering-classification} in cubical Agda~\cite{covering-agda}, for useful comments and discussions.
\fi

\section{Homotopy type theory}
\label{hott}
We begin by briefly recalling the main notations and tools of homotopy type theory. Detailed presentations can be found in~\cite{hottbook,rijke2022introduction}.

\subparagraph{Universe.}
We write $\U$ for the \emph{universe} type, whose elements are small types, which is supposed to be closed under dependent sum and product types.
Given a type $A:\U$ and a type family $B:A\to\U$, we write $\Sigma A.B$ or $\Sigma(x:A).B\,x$ for dependent sum types and $\Pi A.B$ or $\Pi(x:A).B\,x$ or $(x:A)\to B\,x$ for dependent product types. As customary, we respectively write $A\times B$ and $A\to B$ for product and arrow types, which correspond to non-dependent particular cases of the previous constructions. We respectively write~$0$ and~$1$ for the initial and terminal types.

\subparagraph{Identities.}
The type theory features a notion of definitional equality and we write $t\eqdef u$ when two terms $t$ and $u$ are definitionally equal. It also features a notion of propositional equality: given $x,y:A$, we write $x=y$ for the type of \emph{identities} or \emph{paths} between~$x$ and~$y$ in~$A$. For any point $x:A$, there is a path $\refl:x=x$ witnessing reflexivity. Given paths $p:x=y$ and $q:y=z$, we can construct paths $p\cdot q:x=z$ corresponding to concatenation or transitivity, and $\sym p:y=x$ corresponding to inverse or symmetry.

\subparagraph{Pointed types.}
A \emph{pointed type} is a type $A$ together with a distinguished element, often written $\pt_A$ or even $\pt$ (equivalently, the distinguished element can be specified by providing a map $1\to A$).
A \emph{pointed} map $f:A\to B$ between pointed types~$A$ and~$B$ is a map between the underlying types together with an identification $f(\pt_A)=\pt_B$. We write $A\pto B$ for the corresponding type of pointed maps.

\subparagraph{Univalence.}
A map $f:A\to B$ is an \emph{equivalence} when it admits both a left and a right inverse. We write $A\equivto B$ for the type of equivalences between~$A$ and~$B$. Any identity between two types canonically induces an equivalence between them. The \emph{univalence} axiom states that the corresponding map $(A=B)\to(A\equivto B)$ is itself an equivalence: in particular, any equivalence induces an identity.

\subparagraph{Homotopy levels.}
A type~$A$ is \emph{contractible} when it is equivalent to~$1$.
A type~$A$ is a \emph{proposition} (\resp a \emph{set}, \resp a \emph{groupoid}) when for any elements $x,y:A$ the type $x=y$ is contractible (\resp a proposition, \resp a set).
The type of sets is denoted $\Set$.
More generally, we can define a notion of $n$-type for $n\in\set{-2,-1}\cup\N$ by stating that a $(-2)$-type is a contractible one, and an $(n{+}1)$-type is a type~$A$ in which $x=y$ is an $n$-type for every~$x,y:A$. We write $\isType n(A)$ for the predicate indicating that~$A$ is an $n$-type, which can be shown to be a proposition. We write $\Type[n]\defd\Sigma\U.\isType n$ for the type of $n$-types.

\subparagraph{Truncation.}
The \emph{$n$-truncation} $\trunc nA$ is the universal way of turning a type~$A$ into an $n$-type: it comes equipped with a map $\trunq n{{-}}:A\to\trunc nA$ such that any map $f:A\to B$, whose target $B$ is an $n$-type, induces a unique map $\tilde f:\trunc nA\to B$ such that $\tilde f(\trunq nx)=f(x)$ for $x:A$.
\label{n-connected}
A type~$A$ is \emph{$n$-connected} when $\trunc nA=1$. In particular, a type is \emph{connected} (\resp \emph{simply connected}) when it is $0$-connected (\resp $1$-connected).
The \emph{connected component} of an element $a$ of~$A$ is $\Sigma(x:A).\ptrunc{a=x}$.

\subparagraph{Loop space.}
The \emph{circle} $\S1$ is the free pointed type containing a path $\Sloop:\pt=\pt$.
Given a pointed type~$A$, its \emph{loop space}~$\Loop A$ is $\pt=\pt$, which can be shown to coincide with the type $\S1\pto A$. Its \emph{fundamental group} is the type $\pi_1(A)\defd\strunc{\Loop A}$, which is canonically equipped with a group structure induced by path concatenation.
When $A$ is a pointed connected groupoid, its loop space coincides with its fundamental group, so that it is a group. A \emph{delooping} of a group~$G$ is a type $\B G$ equipped with an isomorphism of groups $\Loop\B G\isoto G$ (such a type always exists and is unique, thus the notation). It is easily shown that a map $\B G\to\Set$ corresponds to a set equipped with an \emph{action} of~$G$ in the usual sense. 

\subparagraph{Fiber sequences.}
Given a map $f:A\to B$ and $y:B$, the \emph{fiber} of $f$ at $y$ is the type $\fib fy\defd\Sigma(x:A).(f(x)=y)$. When~$B$ is pointed, the \emph{kernel} of~$f$ is $\ker f\defd\fib f\pt$. A composable pair of morphisms $
\begin{tikzcd}[cramped,sep=small]
  F\ar[r]&A\ar[r,"f"]&B
\end{tikzcd}
$ is a \emph{fiber sequence} when $F$ is the kernel of~$f$ and the map $F\to A$ is the first projection.
A map is $n$-connected (\resp $n$-truncated) when all its fibers are.

\subparagraph{The Grothendieck duality.}
A fundamental property in homotopy type theory is that, given a type~$A$, \emph{fibrations} over~$A$ correspond both to types over~$A$ and to type families indexed by~$A$: this is the \emph{Grothendieck duality}. We write $\Type/A\defd\Sigma(B:\U).(B\to A)$ for the type of \emph{types over $A$} and $A\to\U$ for the type of \emph{type families} indexed by~$A$. We have a function $\fib{}:\U/A\to(A\to\U)$, which to $p:B\to A$ associates the fiber $\fib p:A\to\U$, and a function $\int:(A\to\U)\to(\U/A)$, which to a family $F:A\to\U$ associates the first projection map $\fst:\Sigma A.F\to A$. The following is shown in~\cite[Section 4.8]{hottbook}:

\begin{theorem}
  \label{grothendieck-duality}
  Given a type~$A$, the above functions induce an equivalence
  $
  \U/A\equivto(A\to\U)
  $.
  Moreover, this correspondence is functorial in the sense that
  given $p:B\to A$ and $q:C\to A$, a morphism $f:B\to C$ with $q\circ f=p$ corresponds to a family of maps
  $
  (x:A)\to\fib px\to\fib qx
  $
  in a way which preserves identities and composition (and thus equivalences).
\end{theorem}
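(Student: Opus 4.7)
The plan is to construct explicit maps in both directions and establish, pointwise, that each composite is equivalent to the identity; univalence and function extensionality then promote these pointwise equivalences to genuine identities of types. The two technical ingredients I would repeatedly invoke are the contractibility of the based path space $\Sigma(x:A).(a=x)$ and the associativity and commutativity of iterated $\Sigma$-types.

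First, I would check that $\fib{}\circ\int$ is pointwise the identity on $A\to\U$. Given $F:A\to\U$, the fiber of $\fst:\Sigma A.F\to A$ at $x:A$ unfolds to $\Sigma((a,v):\Sigma A.F).(a=x)$; reassociating and then contracting on the outer variable~$a$ leaves $F(x)$. Applying function extensionality and univalence promotes this family of equivalences into an identification of type families.

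Next, I would check that $\int\circ\fib{}$ is the identity on $\U/A$. For $(B,p)$, the total type $\Sigma(x:A).\fib p x$ equals $\Sigma(x:A).\Sigma(b:B).(p(b)=x)$; swapping the outer two sums and contracting on~$x$ yields $B$. The delicate point, and what really makes this a statement in $\U/A$ rather than merely in $\U$, is that the resulting equivalence must commute with the projections to~$A$: one checks that $\fst:\Sigma(x:A).\fib p x\to A$ corresponds to $p:B\to A$ under the equivalence, using the very path $\alpha:p(b)=x$ that is being contracted.

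For the functoriality claim, given $p:B\to A$, $q:C\to A$ and $f:B\to C$ with $q\circ f=p$, the induced fiberwise map sends $(b,\alpha:p(b)=x)$ to $(f(b),\beta)$, where $\beta:q(f(b))=x$ is obtained by concatenating the witness $q(f(b))=p(b)$ with $\alpha$. Conversely, a fiberwise family $g$ yields a map $b\mapsto\fst(g(p(b))(b,\refl)):B\to C$, with the commuting triangle provided by the second component of $g(p(b))(b,\refl)$. The main obstacle throughout is bookkeeping rather than mathematical difficulty: keeping the projections and coherences aligned so that the equivalences live in the slice $\U/A$ and not merely in $\U$, and then checking compatibility with identities and composition, both of which reduce to direct unfolding once the slice-level equivalence has been set up.
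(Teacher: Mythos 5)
Your proposal is correct and follows essentially the same route as the reference the paper cites for this result: the paper offers no proof of its own but points to Section~4.8 of the HoTT book, whose argument is precisely the one you reconstruct (contracting based path spaces for the round trip on $A\to\U$, swapping sums and contracting for the round trip on $\U/A$, and tracking the projection to stay in the slice). The functoriality maps you give, including using the coherence $q\circ f=p$ to produce $\beta$ and evaluating at $(b,\refl)$ for the inverse direction, also match the standard treatment; the only thing you leave implicit, and rightly flag as bookkeeping, is the verification that the slice-level identification respects the second component (the equality of projections) when passing through univalence.
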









\section{Higher covering types}
\label{higher-cov}

\subsection{Covering spaces in topology}
\label{cov-top}
We briefly recall here the traditional notion of covering space in topology and refer to standard textbooks for details~\cite{hatcher}. Given a topological space~$A$, a \emph{covering} is a space~$B$ together with a map~$p:B\to A$ which is \emph{locally trivial}. This means that every point~$x:A$ admits an open neighborhood such that the preimage $p^{-1}(U)$ is homeomorphic to a space of the form $U\times F$, where~$F$ is a set equipped with the discrete topology and the restriction of $p$ to~$p^{-1}(U)$ is the first projection. When~$A$ is connected, which will be the case of interest here, the cardinality of~$F$ has to be the same for every point~$x$ and is called the number of \emph{sheets} of the covering. Below we have figured a covering with 3 sheets (on the left) and with countably many sheets (on the right):
\[
  \begin{tikzpicture}[scale=1.2,declare function={f(\x)=0.2*sin(\x)+\x/1000;},rotate=90,xscale=0.5,every circle node/.style={draw,fill,inner sep=1}]
    \draw plot[variable=\x,domain=-30:1000,samples=55,smooth] ({cos(\x)},{f(\x)}) to[out=0,in=195] cycle;
    \draw (0,-2) arc(-90:270:1cm and 0.2cm);
    \draw[->]  (0,-0.4) -- (0,-1.4) node[midway,above]{$p$};
    \draw (0,.79) node[circle]{};
    \filldraw (0,.42) node[circle]{};
    \filldraw (0,.08) node[circle]{};
    \filldraw (0,-2) node[circle]{} node[right]{$\S1$};
  \end{tikzpicture}
  \qquad\qquad\qquad\qquad
  \begin{tikzpicture}[scale=1.2,declare function={f(\x)=0.2*sin(\x)+\x/1000;},rotate=90,xscale=0.5,every circle node/.style={draw,fill,inner sep=1}]
    \filldraw[white] (0,1) circle (0.02);
    \draw plot[variable=\x,domain=-30:1110,samples=60,smooth] ({cos(\x)},{f(\x)});
    \draw[dotted] plot[variable=\x,domain=-70:-30,samples=10,smooth] ({cos(\x)},{f(\x)});
    \draw[dotted] plot[variable=\x,domain=1110:1150,samples=10,smooth] ({cos(\x)},{f(\x)});
    \draw (0,-2) arc(-90:270:1cm and 0.2cm);
    \draw[->]  (0,-0.4) -- (0,-1.4) node[midway,above]{$p$};
    \filldraw (0,.79) node[circle]{};
    \filldraw (0,.42) node[circle]{};
    \filldraw (0,.08) node[circle]{};
    \filldraw (0,-2) node[circle]{} node[right]{$\S1$};
  \end{tikzpicture}
\]
In order for the constructions of coverings to be well-behaved, we will implicitly assume in the following that~$A$ satisfies reasonable assumptions (namely being connected, locally path-connected, semilocally simply-connected). We will also assume that it comes equipped with a distinguished point~$\pt$. An important feature of coverings is that they have the \emph{path lifting property}: given a path $\pi:x\rightsquigarrow y$ in~$A$ and an element $\tilde x\in\tilde A$ with $p(\tilde x)=x$ there is a unique path $\tilde\pi:\tilde x\to\tilde y$ with $p(\tilde\pi)=\pi$. This implies that we have an action of the fundamental group~$\pi_1(A)$ on the fiber $p^{-1}(\pt)$ and, in fact, this characterizes coverings:

\begin{proposition}[{\cite[Section 1.3, p. 70]{hatcher}}]
  \label{covering-action}
  Coverings of~$A$ are in bijection with sets equipped with an action of~$\pi_1(A)$.
\end{proposition}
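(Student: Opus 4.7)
The plan is to construct maps in both directions and verify they are mutually inverse up to isomorphism. The key geometric ingredients are the path lifting property (already noted above) and the existence of a universal covering $\tilde A\to A$, which is guaranteed by the standing assumptions on $A$ (connected, locally path-connected, semilocally simply-connected).

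From coverings to actions, I would proceed as follows. Given a covering $p:B\to A$, set $F\defd p^{-1}(\pt)$, and define an action of $\pi_1(A,\pt)$ on $F$ by sending $([\gamma],\tilde x)$ to $\tilde\gamma(1)$, where $\tilde\gamma$ is the unique lift of $\gamma$ starting at $\tilde x$. Endpoint preservation of lifts shows $\tilde\gamma(1)\in F$. One then checks that this descends from loops to homotopy classes (by the homotopy lifting property, homotopic loops produce endpoint-equal lifts) and that concatenation of loops corresponds to concatenation of lifts (by uniqueness), giving a genuine group action.

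From actions to coverings, given a set $F$ equipped with an action of $G=\pi_1(A,\pt)$, I would use the universal covering $q:\tilde A\to A$, whose group of deck transformations is canonically isomorphic to $G$ and which realizes $A$ as the quotient $\tilde A/G$. Then form the associated bundle $B\defd(\tilde A\times F)/G$, where $G$ acts diagonally (by deck transformations on $\tilde A$ and by the given action on $F$), and let $p:B\to A$ be the map induced by $q$ on the first factor. Local triviality of $p$ is inherited from local triviality of $q$: over a trivializing open $U\subseteq A$ for $q$, one has $q^{-1}(U)\cong U\times G$ equivariantly, so $p^{-1}(U)\cong U\times (G\times F)/G\cong U\times F$.

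Finally, I would check the two constructions are mutually inverse. Given a covering $p:B\to A$, choosing for each $\tilde x\in F$ a path in $\tilde A$ from a fixed basepoint identifies $(\tilde A\times F)/G$ with $B$, matching the monodromy action with the original lifting action. Given an action on $F$, the monodromy of the associated bundle recovers the original action by unwinding the construction. The main obstacle is keeping track of basepoints and verifying equivariance throughout the local trivialization argument; once one fixes conventions at a basepoint in $\tilde A$ over $\pt$, the remaining verifications reduce to uniqueness of lifts and functoriality of the universal covering construction.
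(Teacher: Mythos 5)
Your proof is correct and is essentially the classical argument of Hatcher~\cite[Section 1.3]{hatcher}, which is exactly what the paper cites for this proposition. Note, however, that the paper does not actually reprove this statement: \cref{covering-action} sits in a background section recalling classical covering space theory and is quoted from the reference. What the paper \emph{does} do is re-derive the corresponding statement synthetically in \cref{cov-pi-act}, and that argument is worth contrasting with yours. Instead of the monodromy action via path lifting (one direction) and the associated bundle $(\tilde A\times F)/G$ (the other direction), followed by a careful basepoint- and equivariance-chasing verification that the two constructions are mutually inverse, the HoTT argument observes that by the Grothendieck duality of \cref{grothendieck-duality} coverings of $A$ \emph{are} maps $A\to\Set$; that since $\Set$ is a groupoid such maps factor uniquely through the groupoid truncation $\gtrunc A$; and that $\gtrunc A$ is a delooping $\B\pi_1 A$ by \cref{gtrunc-Bpi1}, so that maps out of it are precisely $\pi_1(A)$-sets. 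This collapses both directions of your construction, and the mutual-inverse check, into a short chain of equivalences with no explicit universal cover, path lifting, or local triviality. The price is having the synthetic framework available in the first place; your classical argument is the one you must run if you want the point-set statement you were actually asked to prove. One small imprecision in your writeup: in the mutual-inverse step you say you choose, ``for each $\tilde x\in F$, a path in $\tilde A$ from a fixed basepoint''. What one really does is observe that a point of $\tilde A$ already \emph{is} a homotopy class of paths from $\pt$, and the map $(\tilde A\times F)/G\to B$ sends $[(\tilde a,f)]$ to the endpoint of the lift (to $B$, starting at $f$) of any path representing $\tilde a$; no extra choice per element of $F$ is involved. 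This does not affect the correctness of your argument, only its phrasing.
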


The \emph{universal covering} is the only covering $p:\tilde A\to A$ with $\tilde A$ simply connected. For instance, the universal covering of $\S1$ is the ``helix'' pictured on the right above. This space can be shown to be unique up to isomorphism and can be constructed as follows:

\begin{proposition}[{\cite[Section 1.3, p. 64]{hatcher}}]
  \label{top-universal-cover}
  The universal cover $\tilde A$ of~$A$ can be constructed as the space whose points are pairs consisting of a point~$x:A$ and a path $p:\pt\rightsquigarrow x$ up to homotopy, equipped with a suitable topology, the map $p:\tilde A\to A$ being given by first projection.
\end{proposition}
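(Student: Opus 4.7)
The plan is to construct $\tilde A$ explicitly as the set of pairs $(x,[\gamma])$ where $x:A$ and $[\gamma]$ is the rel-endpoints homotopy class of a path $\gamma$ from $\pt$ to $x$, with $p(x,[\gamma])\defd x$, and then to equip it with a topology witnessing both local triviality over~$A$ and simple connectedness. Following the standard recipe, I would take as a basis of open sets the collection
\[
  U_{(x,[\gamma])}\defd\{(y,[\gamma\cdot\eta])\mid y\in U,\ \eta\text{ a path in }U\text{ from }x\text{ to }y\},
\]
one for every path-connected open $U\subseteq A$ chosen small enough for semilocal simple connectedness to apply (every loop in~$U$ is null-homotopic in~$A$), every $x\in U$, and every homotopy class $[\gamma]$ of a path from~$\pt$ to~$x$. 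The semilocal simple connectedness assumption is precisely what is needed to see that $[\gamma\cdot\eta]$ depends only on the endpoint~$y$, so that $p$ restricts to a bijection $U_{(x,[\gamma])}\to U$.

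Next, I would verify that this genuinely defines a basis and that $p$ is a covering map. The key point is that the preimage $p^{-1}(U)$ decomposes as the \emph{disjoint} union of the sheets $U_{(x,[\gamma])}$ indexed by the homotopy classes $[\gamma]$ of paths from~$\pt$ to~$x$, which follows again from semilocal simple connectedness. Each sheet maps homeomorphically onto~$U$ by~$p$, yielding a trivialization $p^{-1}(U)\cong U\times F$ with~$F$ discrete, so $p$ is locally trivial. Path-connectedness of $\tilde A$ follows from the explicit formula: any $(x,[\gamma])$ is joined to the base\-point $(\pt,[\refl])$ by the canonical lift $t\mapsto(\gamma(t),[\gamma\restriction_{[0,t]}])$, which is continuous by the very definition of the basis above.

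For simple connectedness, I would argue as follows. Given a loop $\tilde\alpha$ in $\tilde A$ based at $(\pt,[\refl])$, its projection $\alpha=p\circ\tilde\alpha$ is a loop in~$A$, and by uniqueness of path lifts together with the canonical lift formula of the previous paragraph, the endpoint of $\tilde\alpha$ equals $(\pt,[\alpha])$. Since $\tilde\alpha$ is a loop, this forces $[\alpha]=[\refl]$, so $\alpha$ bounds a homotopy in~$A$ which, by the homotopy lifting property for coverings, lifts to a null-homotopy of $\tilde\alpha$; hence $\pi_1(\tilde A,(\pt,[\refl]))$ is trivial.

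The main obstacle is neither the set-theoretic construction nor the soundness of the covering structure once the topology is in place, but rather checking that the proposed basis does generate a topology making~$p$ a local homeomorphism: the well-definedness of $U_{(x,[\gamma])}$, the disjointness of distinct sheets above the same $U$, and the compatibility of basic opens under intersection all hinge on a careful use of the semilocal simple-connectedness hypothesis, which is why this hypothesis is baked into the standing assumptions on~$A$.
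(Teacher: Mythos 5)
The paper does not prove this proposition: it is stated as classical background and cited directly to Hatcher, with no accompanying proof. Your reconstruction is nevertheless the standard argument that Hatcher gives, and it is essentially correct: the basis $U_{(x,[\gamma])}$ of "sheets" over sufficiently small path-connected opens, the use of semilocal simple connectedness both to make $[\gamma\cdot\eta]$ well defined and to make distinct sheets over the same $U$ disjoint, the canonical lift $t\mapsto(\gamma(t),[\gamma_t])$ establishing path-connectedness, and the monodromy argument for simple connectedness (a loop $\tilde\alpha$ upstairs must equal the canonical lift of $\alpha=p\circ\tilde\alpha$ by uniqueness of lifts, forcing $[\alpha]=[\refl]$, whose null-homotopy then lifts). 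One small remark: since $[\gamma]$ already determines $x=\gamma(1)$, the subscript on $U_{(x,[\gamma])}$ is redundant, and you should be slightly more explicit that the path-connected opens over which every loop is null-homotopic in $A$ genuinely form a \emph{basis} for the topology of $A$; this is where local path-connectedness (not just semilocal simple connectedness) is used. You correctly flag that the remaining work is the point-set verification that the proposed collection is a basis and that $p$ restricts to homeomorphisms on sheets; this is indeed where the care lies, and it is the part Hatcher also spells out. Worth noting in the context of this paper: the type-theoretic analogue in \cref{ucov-image} and \cref{ucov-hclasses} realizes exactly this construction, but with the point-set topology replaced by the intrinsic homotopy structure of types, so none of these topological verifications are needed there.
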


The \emph{fundamental group} $\pi_1(A)$ of~$A$ is the group whose elements are homotopy classes of paths $\pt\rightsquigarrow\pt$ in~$A$, with concatenation as multiplication and constant paths as neutral elements. By the construction of \cref{top-universal-cover}, given $x:A$, the fiber $p^{-1}(x)$ is~$\pi_1(A)$, and we have an action of $\pi_1(A)$ on this fiber given by left multiplication, which induces an action of $\pi_1(A)$ on~$\tilde A$.

\begin{proposition}[{\cite[Theorem 1.38]{hatcher}}]
  \label{ucov-quotient}
  The quotient $\tilde A/\pi_1(A)$ of the universal cover under the above action is precisely~$A$.  
\end{proposition}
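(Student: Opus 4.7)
The plan is to pass the projection $p:\tilde A\to A$ to the quotient and show that the resulting map $\bar p:\tilde A/\pi_1(A)\to A$ is a homeomorphism. Since $p(x,[\gamma])=x$ depends only on the first component, whereas the $\pi_1(A)$-action described above only modifies the second (by pre-concatenation with a loop at~$\pt$), the map $p$ is invariant under the action and therefore factors through a continuous $\bar p$ by the universal property of the quotient topology.

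Next I would check that $\bar p$ is a bijection. Surjectivity is immediate from the path-connectedness of~$A$: any $x:A$ is reached by some path $\gamma:\pt\rightsquigarrow x$, and then $(x,[\gamma])$ projects to~$x$. For injectivity on orbits, suppose $p(x,[\gamma])=p(x',[\gamma'])$; then $x=x'$, and the loop $\alpha\defd\gamma'\cdot\sym\gamma$ satisfies $[\alpha]\cdot(x,[\gamma])=(x,[\alpha\cdot\gamma])=(x,[\gamma'])$, so the two lifts lie in the same $\pi_1(A)$-orbit.

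The final step is to promote this continuous bijection to a homeomorphism. This rests on the fact that $p$ is already a local homeomorphism (indeed, this is exactly what makes the construction of \cref{top-universal-cover} a covering), together with the observation that the $\pi_1(A)$-action is free and transitive on each fiber; these two facts together imply that $\bar p$ is open. The main obstacle in a fully classical treatment is precisely this topological cleanup: one has to specify the topology that turns~$\tilde A$ into a covering and verify that the $\pi_1(A)$-action is properly discontinuous, which is what the semilocal simple-connectedness and local path-connectedness hypotheses on~$A$ are for. We expect all such subtleties to vanish in the HoTT reformulation developed later in the paper, since there fibrations over a type automatically carry the right structure by the Grothendieck duality of \cref{grothendieck-duality}, and the analogue of this proposition should reduce to an essentially formal computation with $\Sigma$-types and truncations.
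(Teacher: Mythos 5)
The paper gives no proof of this proposition: it sits in the classical-topology recap of \cref{cov-top} and is delegated outright to Hatcher (Theorem 1.38), so there is no in-paper argument for your proposal to diverge from. Your sketch is the standard textbook proof and is essentially correct. The invariance and surjectivity steps are right, and the orbit-injectivity computation is right as well: with $\gamma,\gamma':\pt\rightsquigarrow x$, the loop $\alpha\defd\gamma'\cdot\sym\gamma$ is based at $\pt$ and sends $(x,[\gamma])$ to $(x,[\alpha\cdot\gamma])=(x,[\gamma'])$ under the left-multiplication action the paper describes. For the final promotion of $\bar p$ to a homeomorphism you can be a bit more direct than you are: a covering projection $p$ is an open surjection, hence a quotient map, and a quotient map that factors as $\bar p\circ q$ through a continuous bijection $\bar p$ forces $\bar p$ to be a homeomorphism; there is no need to re-invoke freeness and fiberwise transitivity of the action at this stage, since those facts were already consumed in the injectivity step. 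Your closing remark is also accurate and well-aligned with the paper's own agenda: in the HoTT development this proposition is effectively replaced by the fiber sequence $\tilde A\to A\to\B\pi_1A$ of \cref{universal-covering-fiber-sequence}, where the quotient appears as a homotopy quotient and the point-set conditions (local path-connectedness, semilocal simple-connectedness, proper discontinuity) have no counterpart because fibrations carry the needed structure automatically by \cref{grothendieck-duality}.
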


\noindent
The action can be shown to be free, so that the quotient above coincides with the homotopy quotient.



\subsection{The universal fibration}
\label{universal-fibration}
We begin by describing a situation in homotopy type theory, which is close to coverings, and fundamental with respect to the characterization of identity types. This construction might seem a bit artificial at first, but we will see that the point of view nicely generalizes to coverings, in the sense that what we describe in this section is a kind of ``universal $\infty$-covering''.

Suppose given a pointed type~$A$. A \emph{pointed type over $A$} is a pointed morphism $p:B\pto A$. A \emph{morphism} between two such pointed types $p:B\pto A$ and $q:C\pto A$ is a map $f:B\to C$ together with an equality $f\circ p=q$. Such morphisms compose in the expected way, which is compatible with the composition of underlying maps.

\begin{definition}
  \label{unviersal-pointed-type}
  A pointed type~$p:B\pto A$ is \emph{universal} when for every pointed type~$q:C\to A$, the type of morphisms from~$p$ to~$q$ is contractible.
\end{definition}

\noindent
Since it satisfies a universal property, the universal pointed type is unique, and it exists thanks to the following characterization:

\begin{proposition}
  \label{contractible-univ-oo-covering}
  A pointed type~$p:B\pto A$ is universal if and only if~$B$ is contractible.
\end{proposition}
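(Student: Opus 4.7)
The plan is to translate the statement along the Grothendieck correspondence (\cref{grothendieck-duality}). A pointed type $p:B\pto A$ corresponds to a pair $(F,f_0)$, where $F\defd\fib p:A\to\U$ and $f_0:F(\pt_A)$ is the element comprising $\pt_B$ together with the identification $p(\pt_B)=\pt_A$; moreover the total space $\Sigma(a:A).F(a)$ is canonically equivalent to~$B$. By the functorial clause of \cref{grothendieck-duality}, a morphism over~$A$ from $(F,f_0)$ to $(G,g_0)$ amounts, after bookkeeping of the pointings, to a fiberwise family $\alpha:(a:A)\to F(a)\to G(a)$ together with an identification $\alpha_{\pt_A}(f_0)=g_0$. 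Thus $p$ is universal exactly when, for every $(G,g_0)$, the type of such pairs is contractible.

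The canonical example is the based path family $F(a)\defd(\pt_A=a)$ with $f_0\defd\refl$: by path induction, any $g_0$ extends uniquely to an $\alpha$ satisfying $\alpha_{\pt_A}(\refl)=g_0$, so this pair is universal, and its total space is contractible by path induction again. For the backward direction of the proposition in general, if $B$, and hence $\Sigma(a:A).F(a)$, is contractible, then the transport map $\gamma\mapsto\gamma_*(f_0):(\pt_A=a)\to F(a)$ yields an equivalence of total spaces (both contractible) sending center to center, hence by \cref{grothendieck-duality} an equivalence $(\lambda a.\,\pt_A=a,\refl)\equivto(F,f_0)$ over~$A$; so $(F,f_0)$ inherits universality. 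For the forward direction, universal objects in any category are unique up to equivalence, so any universal $(F,f_0)$ is equivalent over~$A$ to the based path family, forcing its total space, and hence~$B$, to be contractible.

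The main obstacle is the pointed refinement of \cref{grothendieck-duality}: a morphism of pointed types over~$A$ carries both a pointing for~$f$ and a commutation $q\circ f=p$, together with a coherence between them, and one must verify that after translation this coherence collapses cleanly into the single equation $\alpha_{\pt_A}(f_0)=g_0$. Once this bookkeeping is in place, the remainder of the argument is routine path induction and the Yoneda-style uniqueness of initial objects.
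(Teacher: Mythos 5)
Your proof is correct, but it routes through different machinery than the paper's. The paper observes directly, at the level of types over~$A$, that the terminal type~$1$ (pointed trivially, projected via the pointing map $a:1\to A$) is \emph{initial} among pointed types over~$A$: for any $q:C\pto A$, a morphism $1\to C$ over~$A$ is forced by the pointing. A universal $p:B\pto A$ is also initial, so by uniqueness of initial objects $B\equivto1$, and conversely. That is the whole argument — no Grothendieck duality, no transport, no path family. You instead translate first to fiberwise data $(F,f_0)$ via \cref{grothendieck-duality}, identify the based path family $a\mapsto(\pt_A=a)$ as the canonical universal instance (with contractible total space by the singleton lemma), and then transfer universality along the transport equivalence $\gamma\mapsto\gamma_*(f_0)$, which is an equivalence because both total spaces are contractible. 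This is sound, and has the virtue of making visible the connection to the universal fibration $\UF A$ that the paper introduces immediately afterward — so your argument in some sense previews the Grothendieck-dual picture of \cref{universal-fibration}. What it costs you is the ``bookkeeping obstacle'' you flag at the end: translating the pointed-morphism coherence across the correspondence. The paper sidesteps this entirely because $1$ being terminal collapses all that data at the source; you should note that the direct argument avoids the issue you worry about rather than resolving it.
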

\begin{proof}
  By definition of morphisms and initiality of~$1$, the type $1$ is initial among pointed types over $A$. A universal pointed type $p:B\to A$ is also initial among pointed types over $A$, by the universal property, and thus $B$ is equivalent to $1$, \ie contractible.
\end{proof}




\noindent
By immediate computations, we have:

\begin{proposition}
  The fiber of the universal pointed type~$p:1\to A$ is $\Loop A$.
\end{proposition}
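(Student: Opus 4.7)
The plan is to unfold definitions and observe that the result is immediate. By \cref{contractible-univ-oo-covering}, the universal pointed type over~$A$ is the unique pointed map $p:1\pto A$, which (being pointed) must send the unique element $\pt_1:1$ to $\pt_A$. The (unpointed) notion of fiber at a point $y:A$ was defined as $\fib{p}{y}\defd\Sigma(x:1).(p(x)=y)$, and when speaking of ``the'' fiber of a pointed map one means the kernel, \ie the fiber at the basepoint.

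First I would compute
\[
\fib{p}{\pt_A}\;\defd\;\Sigma(x:1).(p(x)=\pt_A).
\]
Since $1$ is contractible with unique inhabitant $\pt_1$, the $\Sigma$-type on the right is equivalent to $p(\pt_1)=\pt_A$. By the pointedness of $p$ we have $p(\pt_1)\eqdef\pt_A$ (or at least a chosen identification $p(\pt_1)=\pt_A$, along which we transport), so this type is equivalent to $\pt_A=\pt_A$, which is by definition $\Loop A$.

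There is no real obstacle here: every step is either a definitional unfolding, an application of the fact that $\Sigma(x:1).B(x)\equivto B(\pt_1)$ (which is an instance of contractibility of $1$), or the definition of the loop space. If one wished to be scrupulous about the pointed structure, one could note that the pointed map $p:1\pto A$ is really the datum of a point of $A$, namely $\pt_A$, so the ambiguity in ``$p(\pt_1)=\pt_A$'' is harmless. The conclusion, $\fib{p}{\pt_A}\equivto\Loop A$, then follows by univalence.
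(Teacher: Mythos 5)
Your proof is correct and is exactly the ``immediate computation'' the paper alludes to: the paper gives no proof beyond remarking that the result is immediate, and your unfolding — $\fib{p}{\pt_A}\defd\Sigma(x:1).(p(x)=\pt_A)\equivto(p(\pt_1)=\pt_A)\equivto(\pt_A=\pt_A)\eqdef\Loop A$, using contractibility of $1$ and the pointing of $p$ — is the intended argument.
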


It is very illuminative to translate the previous definitions and properties under the Grothendieck duality (see \cref{grothendieck-duality}). A pointed type over~$A$ corresponds to a fibration $P:A\to\U$ together with a distinguished element~$\pt_P$ of~$P\pt$, and, by \cref{contractible-univ-oo-covering}, such a fibration is universal precisely when the total space $\Sigma A.P$ is contractible. The universal pointed type thus corresponds to the \emph{universal fibration}, which is the map $\UF A:A\to\U$ defined by $\UF Ax\eqdef(\pt=x)$, and pointed by $\refl[\pt]$.
By \cref{grothendieck-duality}, a morphism between fibrations~$P,Q:A\to\U$ corresponds to a family of maps
$
  f:(x:A)\to P\,x\to Q\,x
$
together with an identification $f\pt\pt_P=\pt_Q$. 
The initiality property of universal pointed types (\cref{unviersal-pointed-type}) then translates as the \emph{fundamental theorem of identity types}~\cite[Theorem 11.2.2]{rijke2022introduction}:

\begin{theorem}
  Suppose given a type family~$P:A\to\U$ pointed by $\pt_P:P\,\pt$, together with a family of maps
  \[
    F:(x:A)\to(\pt=x)\to P\,x
  \]
  and an identification $F\,\pt\,\refl[\pt]=\pt_P$. Then $F$ is a family of equivalences if and only if the total space $\Sigma A.P$ is contractible.
\end{theorem}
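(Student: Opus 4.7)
The plan is to read this result as nothing more than the functorial content of the Grothendieck duality (\cref{grothendieck-duality}), applied to the universal fibration $\UF A$ described just above the statement.

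First I would identify the objects at play on the ``type over $A$'' side. The universal pointed type over $A$ is, by \cref{contractible-univ-oo-covering}, any pointed map $p:B\pto A$ with $B$ contractible; concretely we may take $p:1\to A$ sending the unique element to $\pt$. Its associated fibration under \cref{grothendieck-duality} is $\UF A:A\to\U$, $\UF Ax\eqdef(\pt=x)$, pointed by $\refl[\pt]$, and its total space is $\Sigma(x:A).(\pt=x)$, which is contractible (the based path space). On the other side, the data $(P,\pt_P)$ corresponds, via $\int$, to the pointed type over $A$ given by $\fst:\Sigma A.P\to A$ pointed at $(\pt,\pt_P)$.

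Next I would translate the family $F$. By the functorial part of \cref{grothendieck-duality}, a family $F:(x:A)\to\UF Ax\to P\,x$ corresponds to a morphism of types over $A$ from $\bigl(\Sigma(x:A).(\pt=x),\fst\bigr)$ to $(\Sigma A.P,\fst)$; the extra identification $F\,\pt\,\refl[\pt]=\pt_P$ upgrades this to a morphism of \emph{pointed} types over $A$. The functoriality statement further tells us that $F$ is a family of equivalences (fiberwise) if and only if the induced map on total spaces
\[
  \tilde F\colon\Sigma(x:A).(\pt=x)\longrightarrow\Sigma A.P
\]
is an equivalence.

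Now I would finish by a one-line contractibility argument. The domain of $\tilde F$ is contractible, so $\tilde F$ is an equivalence precisely when its codomain $\Sigma A.P$ is contractible. Chaining the two equivalences gives the claim: $F$ is a family of equivalences iff $\Sigma A.P$ is contractible. The only step that requires a little care is checking that the pointing data $F\,\pt\,\refl[\pt]=\pt_P$ is exactly what is needed to make the correspondence with morphisms of \emph{pointed} types over $A$ go through; but this is precisely what is recorded in the paragraph preceding the theorem, so no real obstacle remains.
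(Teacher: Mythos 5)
Your proposal is correct and matches the paper's own proof: both proceed by translating $F$ across the functorial Grothendieck duality (\cref{grothendieck-duality}) into a pointed map from the universal pointed type over~$A$ to $\Sigma A.P$, and then conclude by noting that the source is contractible. The only cosmetic difference is that you model the universal pointed type by $\Sigma(x:A).(\pt=x)\to A$ whereas the paper uses $1\to A$ directly, which amounts to the same thing.
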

\begin{proof}
  Under the Grothendieck duality (\cref{grothendieck-duality}), the type family~$P$ corresponds to the pointed type $p:\Sigma A.P\to A$ over~$A$, given by the first projection. The family of maps~$F$ then corresponds to a morphism between the universal type over~$A$ and~$p$, \ie to a pointed map $f:1\to\Sigma A.P$. Then~$F$ is a family of equivalences if and only if the induced map~$f$ is an equivalence, \ie if and only if $\Sigma A.P$ is contractible.
\end{proof}

\begin{remark}
  In the situation of the above theorem, the map~$P$ can be thought of as being \emph{representable}, in the sense that $Px=(\pt=x)$ (and identities can be thought of as homs in types). This notion was actually used by Voevodsky~\cite{voevodsky2006very} in order to first define identity types.
\end{remark}



\subsection{Higher covering types}
We now introduce the notion of~$n$-covering type, for any natural number~$n$, which can be understood as an $n$-truncated variant of the pointed types of previous section. The traditional notion of covering is the particular case $n=0$, as we indicate in remarks.

An \emph{$n$-covering} of~$A$ is a map $p:B\to A$ whose fibers are $n$-types; such a map is also said to be \emph{$n$-truncated}. We write
\[
  \Covering[n](A)
  \qdefd
  \Sigma(B:\U).\Sigma(f:B\to A).(x:A)\to\isType n(\fib fx)
\]
for the type of coverings of~$A$. Under the Grothendieck duality, those can also be defined as families of $n$-types.


\begin{lemma}
  \label{covering-maps}
  We have an equivalence $\Covering[n](A)\equivto(A\to\Type[n])$.
\end{lemma}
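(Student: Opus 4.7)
The plan is to derive this directly as a restriction of \cref{grothendieck-duality}. Recall that the duality provides an equivalence $\Phi : \U/A \equivto (A \to \U)$ sending a type over~$A$, given by $p : B \to A$, to its fiber family $\fib p : A \to \U$.

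First, I would reorganize both sides as dependent sums whose second component is a pointwise proposition on the underlying datum. The definition of $\Covering[n](A)$ can be re-associated as the subtype of $\U/A$ carved out by the predicate ``all fibers of $p : B \to A$ are $n$-types''. Dually, unfolding $\Type[n] \eqdef \Sigma\U.\isType n$ and commuting $\Sigma$ past $\Pi$,
\[
  (A \to \Type[n]) \equivto \Sigma(P : A \to \U).(x:A) \to \isType n(Px),
\]
which exhibits $A \to \Type[n]$ as the subtype of $A \to \U$ consisting of families landing pointwise in $\Type[n]$.

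Next, I would observe that the two subtype conditions match under $\Phi$: a fibration $p : B \to A$ has all fibers $n$-truncated if and only if its image $\fib p$ under $\Phi$ takes values in $\Type[n]$. This is tautological, since $\Phi$ sends $p$ to $x \mapsto \fib p(x)$. Because an equivalence restricts to an equivalence between subtypes cut out by matching propositions, transporting along $\Phi$ yields the claimed equivalence.

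I do not expect a real obstacle; once the definitions are unfolded, the statement is a routine specialization of \cref{grothendieck-duality}. The only point requiring (minor) care is that $\isType n$ is a proposition invariant under equivalences, which is what allows the $n$-truncation condition to be transported coherently along $\Phi$ (implicitly using univalence).
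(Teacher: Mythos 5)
Your proof is correct and takes the same route as the paper, which simply states that the equivalence ``follows immediately from Grothendieck duality''; you have just made explicit what ``immediately'' means — namely, re-associating both sides as subtypes of $\U/A$ and $A\to\U$ cut out by the pointwise $n$-truncation predicate, using $\Sigma$--$\Pi$ distributivity on the right, and observing that the predicates correspond under $\fib{}$ so the equivalence restricts.
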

\begin{proof}
  Follows immediately from Grothendieck duality (\cref{grothendieck-duality}).
\end{proof}

\begin{remark}
  \label{cov-pi-act}
  For $n=0$, we recover the definition of covering types of~\cite[Definition 1]{harper2018covering}, as maps~$A\to\Set$.
  Since $\Set$ is a groupoid~\cite[Theorem 7.1.11]{hottbook}, the universal property of groupoid truncation provides us with an equivalence $(A\to\Set)\equivto(\gtrunc A\to\Set)$. A covering of~$A$ thus corresponds to a map $\gtrunc{A}\to\Set$. Moreover, we have by \cref{gtrunc-Bpi1} that $\gtrunc{A}$ is a $\B\pi_1A$, from which we deduce that a covering of~$A$ corresponds to a set equipped with an action of~$\pi_1A$. We thus recover the traditional \cref{covering-action}, which was formalized in homotopy type theory in~\cite[Theorem 4]{harper2018covering}.
\end{remark}

\noindent
A \emph{morphism} $f$ between $n$-coverings $p:B\to A$ and $q:C\to A$ is a map $f:B\to C$ together with an equality $p=q \circ f  $.

A \emph{pointed} $n$-covering is a pointed map $p:B\pto A$ between pointed types, whose underlying map is an $n$-covering. This corresponds exactly to the following notion:

\begin{definition}
  \label{n-pcov}
  A \emph{pointed $n$-covering} is a factorization
  \[
    \begin{tikzcd}
      1\ar[dr,"a"']\ar[rr,"i"]&&B\ar[dl,"p"]\\
      &A
    \end{tikzcd}
  \]
  of the pointing map $a:1\to A$ as $a=p\circ i$ where $p$ is $n$-truncated.
\end{definition}

\noindent
In the following, we sometimes assimilate the covering to the map $p:B\to A$ leaving the data of the pointing $i:1\to B$ of~$B$ implicit.
As expected, a morphism $f$ between pointed $n$-coverings $p$ and $q$ is a pointed map which is a morphism between the underlying $n$-coverings, \ie such that the factorization of the target is $q\circ(f\circ i)$:
\[
  \begin{tikzcd}
    1\ar[dr,"a"']\ar[r,"i"]&B\ar[d,"p"]\ar[r,"f"]&C\ar[dl,"q"]\\
    &A
  \end{tikzcd}
\]

\begin{definition}
  \label{n-ucov}
  A pointed $n$-covering $p:B\to A$ is \emph{universal} when for every $n$-covering $q:C\to A$ there exists a unique map $f:B\to C$ of pointed $n$-coverings.
\end{definition}

\begin{remark}
  For $n=0$, we recover the characterization of pointed universal coverings as being initial in the category of pointed coverings~\cite[Lemma 12]{harper2018covering}.
\end{remark}

\noindent
We sometimes write $\tilde A$ for the universal $n$-covering. This is justified by the fact that, being defined by a universal property, it is uniquely characterized:

\begin{lemma}
  Any two universal pointed $n$-coverings are uniquely isomorphic.
\end{lemma}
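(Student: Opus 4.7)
The plan is to run the standard abstract-nonsense argument showing that any object defined by an initial-type universal property is unique up to unique isomorphism, adapted to the setting of pointed $n$-coverings.

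Suppose we are given two universal pointed $n$-coverings $p:B\pto A$ and $q:C\pto A$. First I would instantiate the universal property of $p$ (\cref{n-ucov}) at the pointed $n$-covering $q$, obtaining a unique morphism of pointed $n$-coverings $f:B\to C$. Symmetrically, instantiating the universal property of $q$ at $p$, I get a unique morphism $g:C\to B$. The uniqueness clause of \cref{n-ucov} then already gives the ``uniquely'' half of the claim, since $f$ is singled out by the universal property.

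To see that $f$ is an isomorphism, I would form the composite $g\circ f:B\to B$, which is a morphism of pointed $n$-coverings from $p$ to itself (composition of morphisms of pointed $n$-coverings is again such a morphism, and pointedness is preserved since $f\circ i_B=i_C$ and $g\circ i_C=i_B$ by the pointedness condition in \cref{n-pcov}). But the identity $\id_B$ is also such an endomorphism. By the uniqueness part of the universal property of~$p$ applied to~$p$ itself, any two endomorphisms of $p$ in the category of pointed $n$-coverings must coincide, so $g\circ f=\id_B$. The symmetric argument using the universal property of $q$ gives $f\circ g=\id_C$. Hence $f$ is an equivalence, and by univalence it induces an identity between the two universal pointed $n$-coverings.

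There is no real obstacle here: the argument is the usual ``initial objects are unique up to unique isomorphism''. The only point that deserves care is checking that composition and the identity are morphisms in the sense of pointed $n$-coverings, \ie that the relevant triangles and pointedness witnesses compose coherently; this is routine once one unfolds the definition of morphism given just after \cref{n-pcov}.
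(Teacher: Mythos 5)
Your proof is correct: it is the standard ``initial objects are unique up to unique isomorphism'' argument, carried out in the category of pointed $n$-coverings, using the existence clause of \cref{n-ucov} twice to produce $f$ and $g$ and the uniqueness clause to force $g\circ f=\id_B$ and $f\circ g=\id_C$. The paper does not spell out a proof of this lemma (it asserts it as an immediate consequence of the universal property), and your argument is exactly the intended one.
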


We have the following characterization of universal $n$-coverings:

\begin{theorem}
  \label{n-ucov-conn}
  Given a pointed type $A$ with pointing map~$a:1\to A$, any factorization $a=p\circ i$ as $n$-connected map $i$ and followed by an $n$-truncated map~$p$ exhibits~$p$ as a universal $n$-covering. Moreover, such a factorization always exists and is unique.
\end{theorem}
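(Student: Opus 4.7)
The plan is to deduce the theorem from the standard $(n\text{-connected},\,n\text{-truncated})$ orthogonal factorization system of homotopy type theory~\cite[Chapter 7]{hottbook}, applied to the pointing map $a:1\to A$.

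For existence of the factorization, one may take $B\defd\Sigma(x:A).\trunc n{\pt=x}$ with $i(\star)\defd(\pt,\trunq n{\refl})$ and $p\defd\fst$: the fibers $\fib px$ are the $n$-types $\trunc n{\pt=x}$, so $p$ is $n$-truncated, and a routine computation of the fibers of $i$ shows that $i$ is $n$-connected. Uniqueness of such a factorization up to equivalence will then follow automatically from the universal property established below.

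For the main claim, let $q:C\to A$ be a pointed $n$-covering with pointing $j:1\to C$ satisfying $q\circ j=a$. This data yields a commutative square
\[
\begin{tikzcd}
1\ar[r,"j"]\ar[d,"i"']&C\ar[d,"q"]\\
B\ar[r,"p"]&A
\end{tikzcd}
\]
in which the left edge $i$ is $n$-connected and the right edge $q$ is $n$-truncated. The orthogonality of such maps supplies a unique diagonal filler $f:B\to C$ satisfying $q\circ f=p$ and $f\circ i=j$, which upon unfolding is precisely the required unique morphism of pointed $n$-coverings from $p$ to $q$ in the sense of \cref{n-pcov}.

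The main (indeed, essentially the only) obstacle is the orthogonality of $n$-connected against $n$-truncated maps. Via \cref{grothendieck-duality}, this reduces to the standard extension lemma~\cite[Theorem 7.5.7]{hottbook}: for $n$-connected $i:X\to Y$ and any family $P:Y\to\Type[n]$, precomposition with $i$ is an equivalence $\bigl(\Pi(y:Y).P(y)\bigr)\simeq\bigl(\Pi(x:X).P(i(x))\bigr)$. Applied to the $n$-truncated family over $B$ obtained by transporting $q$ through $p$, this says that a section of it is uniquely determined by its value at $i(\star)$, which is exactly the pointing $j$ prescribed by the target covering.
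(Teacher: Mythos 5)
Your proof is correct and takes essentially the same route as the paper: factor the pointing map via the $(n\text{-connected},n\text{-truncated})$ factorization system and invoke orthogonality for the unique filler. The paper simply cites~\cite[Theorems 7.6.6 and 7.6.7]{hottbook} for these two steps, whereas you inline the $n$-image construction and reduce orthogonality to the extension lemma~\cite[Theorem 7.5.7]{hottbook} via Grothendieck duality; the substance is the same.
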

\begin{proof}
  By~\cite[Theorem 7.6.6]{hottbook}, the map $a:1\to A$ admits a unique factorization $a=p\circ i$ as required. Moreover, given a $n$-covering $q:B\to A$ pointed by $b:1\to B$, we have a commuting square as on the left
  \[
    \begin{tikzcd}
      1\ar[d,"i"']\ar[r,"b"]\ar[dr,"a"]&B\ar[d,"q"]\\
      \tilde A\ar[r,"p"']&A
    \end{tikzcd}
    \qquad\qquad\qquad\qquad
    \begin{tikzcd}
      1\ar[d,"i"']\ar[r,"b"]&B\ar[d,"q"]\\
      \tilde A\ar[r,"p"']\ar[ur,dotted]&A
    \end{tikzcd}
  \]
  The commutation of the two triangles being given by the fact that we have two pointed $n$-coverings of~$A$ pointed by~$a$. By \cite[Theorem 7.6.7]{hottbook}, since $i$ is $n$-connected and $p$ is $n$-truncated, there is a unique map $\tilde A\to B$ making the two triangles on the right commute, and $p$ is thus universal in the sense of \cref{n-ucov}.
\end{proof}

In practice, the universal $n$-covering can be constructed as follows. We recall from~\cite[Definition 7.6.3]{hottbook} that, given a map $f:B\to A$ and a natural number~$n$, its \emph{$n$-image} is
\[
  \im_nf
  \quad\defd\quad
  \Sigma(x:A).\trunc n{\fib fx}
\]
We write $i_n:B\to\im_nf$ for the canonical map such that $i_n(x)=(f(x),\trunq n{\refl[x]})$ and $p_n:\im_nf\to A$ for the first projection map.

\begin{proposition}
  \label{ucov-image}
  The factorization $a=p_n\circ i_n$ of the pointing map~$a$ as above, exhibits $p_n:\im_na\to A$ as the universal covering of $A$.
\end{proposition}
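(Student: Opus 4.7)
The plan is to reduce the claim to \cref{n-ucov-conn}, which asserts that \emph{any} factorization of the pointing map $a:1\to A$ as an $n$-connected map followed by an $n$-truncated map exhibits the latter as a universal $n$-covering. So it suffices to verify three things about the proposed factorization $a = p_n \circ i_n$: that it is indeed a factorization of~$a$, that $p_n$ is $n$-truncated, and that $i_n$ is $n$-connected.

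The first two are essentially definitional. For the factorization equation, unfolding gives $p_n(i_n(*)) = p_n(a(*),\trunq n{\refl}) = a(*)$, so $p_n\circ i_n = a$ as pointed maps from~$1$ to~$A$. For $p_n$ being $n$-truncated, the fiber over $x:A$ unfolds to $\Sigma((y,t):\im_n a).(y=x)$; contracting along the equation $y = x$ shows this is equivalent to $\trunc n{\fib a x}$, which is an $n$-type by construction of truncation.

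The only substantive step is establishing that $i_n$ is $n$-connected. This is the content of~\cite[Theorem 7.6.6]{hottbook}, applied to the map $a:1\to A$: the $n$-image construction produces exactly the (unique) $n$-connected/$n$-truncated factorization of any map. One may also see it directly: the fiber of $i_n$ over a point $(x,t):\im_n a$ is the identity type $(\pt_A,\trunq n{\refl})=(x,t)$ in $\im_n a$, whose $n$-truncation we need to be contractible. Using the characterization of paths in a $\Sigma$-type and the fact that the second component lands in an $n$-type, the propositional truncation to $n$ of this fiber collapses, via the universal property of truncation on $t$, to the contractible proposition $\trunc n{\fib a x}\ni t$.

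The main (and only) obstacle is thus the $n$-connectedness of $i_n$, and this is already a standard fact from the HoTT book; everything else is unpacking definitions and invoking \cref{n-ucov-conn}.
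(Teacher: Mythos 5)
Your approach matches the paper's: both reduce to \cref{n-ucov-conn} by verifying that the $n$-image factorization is an $n$-connected map followed by an $n$-truncated one, with the HoTT book supplying the key fact. One small note: the precise reference for $i_n$ being $n$-connected (and $p_n$ $n$-truncated) is Lemma~7.6.4 of the HoTT book, which is what the paper cites; Theorem~7.6.6 is the \emph{uniqueness} of such factorizations, not the statement that the image construction yields one (though it is of course proved via Lemma~7.6.4, so the substance of your argument is fine).
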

\begin{proof}
  By \cite[Lemma 7.6.4]{hottbook}, this factorization satisfies the conditions of \cref{n-ucov-conn}.
\end{proof}

\begin{remark}
  \label{ucov-hclasses}
  For $n=0$, we recover the usual definition of the universal covering space as the type of homotopy classes of paths from the distinguished point:
  \[
    \tilde A
    \quad=\quad
    \Sigma(x:A).\strunc{\pt=x}
  \]
  For $n=-1$, the universal covering of~$A$ is its set of connected components. Finally, for $n=\infty$ (we adopt the convention that $\trunc\infty A\eqdef A$), we recover the universal pointed type of \cref{universal-fibration} by contractibility of singletons~\cite[Lemma 3.11.8]{hottbook}.
\end{remark}

\begin{example}
  The universal covering of $A\defd\S1$ is $\Sigma(x:A).(\pt=x)$ (we can remove the 0-truncation because $\S1$ is a groupoid) and thus contractible by~\cite[Lemma 3.11.8]{hottbook} (see \cref{contractible-ucov} for a generalization of this argument).
\end{example}

\begin{remark}
  The factorization results used above hold more generally for any map $a:B\to A$ where $B$ is not necessarily contractible. In this sense, given an arbitrary map $a:B\to A$, we can think of $\im_na$ as the ``universal $n$-cover of~$A$ relative to~$a$''.
\end{remark}

The universal $n$-covering can also be characterized as the covering whose total space is $(n{+}1)$-connected. This can be shown using the following lemma proved in appendix.

\begin{lemma}
  \label{n+1-connected}
  \label{n+1-truncated}
  A pointed connected type~$A$ is $(n{+}1)$-connected (\resp $(n{+}1)$-truncated) if and only if the pointing map $1\to A$ is $n$-connected (\resp $n$-truncated).
\end{lemma}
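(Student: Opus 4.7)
The plan is to handle the truncated and connected halves of the statement separately, in each case by unwinding the relevant definitions on the fibers of $a:1\to A$, which are exactly the identity types $\pt=x$.

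For the truncated equivalence, the forward direction is immediate: if $A$ is $(n{+}1)$-truncated then every identity type in $A$ is $n$-truncated, in particular $\fib ax = (\pt=x)$. For the converse, assume $(\pt=x)$ is an $n$-type for every $x:A$, and fix $x,y:A$; I want $(x=y)$ to be an $n$-type. Being an $n$-type is a proposition, so by connectedness of~$A$ I may untruncate and assume an actual path $p:\pt=x$. Then the map $q\mapsto p\cdot q$ is an equivalence $(x=y)\equivto(\pt=y)$ whose target is an $n$-type by hypothesis.

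For the connected equivalence, I would invoke the standard characterization of identity types in truncations \cite[Theorem 7.3.12]{hottbook}, which supplies, for every $x:A$, an equivalence
\[
  \trunc n{\pt=x}\equivto(\trunq{n+1}{\pt}=\trunq{n+1}{x})
\]
in $\trunc{n+1}A$. The forward direction is then immediate: if $\trunc{n+1}A$ is contractible, all its identity types are contractible, so each $\trunc n{\pt=x}$ is contractible, meaning $a$ is $n$-connected. For the converse, assume each $\trunc n{\pt=x}$ is contractible; I will show $\trunc{n+1}A$ is contractible with center $\trunq{n+1}{\pt}$. The family $y\mapsto(\trunq{n+1}{\pt}=y)$ on $\trunc{n+1}A$ takes values in $n$-types (since $\trunc{n+1}A$ is $(n{+}1)$-truncated, its identity types are $n$-types), hence in particular in $(n{+}1)$-types, so the truncation induction principle applies and it suffices to exhibit $\trunq{n+1}{\pt}=\trunq{n+1}{x}$ for each $x:A$. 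This is extracted from the inhabited type $\trunc n{\pt=x}$ via the equivalence above.

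The main subtlety is the level bookkeeping in this last step: one must verify that the predicate being proved by truncation induction is valued in $(n{+}1)$-types, and this falls out of the fact that identities in an $(n{+}1)$-truncated type are $n$-types. Everything else is routine unfolding of the fiberwise definitions of $n$-truncated and $n$-connected maps, combined with the propositional untruncation trick enabled by connectedness of~$A$.
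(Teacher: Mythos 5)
Your proof is correct. It is close in spirit to the paper's argument but organized differently. The paper first proves a general transfer principle for proposition-valued families over a pointed connected type (reducing a statement indexed by $x,y:A$ and $p:x=y$ to the single case $x=y=\pt$), and then appeals to the characterization of $(n{+}1)$-connectedness and $(n{+}1)$-truncatedness in terms of \emph{all} identity types $x=y$ being $n$-connected, resp.\ $n$-truncated, finally landing on $\Loop A = \fib a\pt$. You work instead with the fibers $\pt=x$ directly and handle the two halves separately: in the truncated converse you use the same propositional-untruncation trick inline (transporting $n$-truncatedness of $\pt=y$ to $x=y$ along a merely-existing path $\pt=x$), and in the connected case you replace the appeal to the path-type characterization by an explicit construction of contractibility of $\trunc{n+1}A$ via truncation induction together with~\cite[Theorem 7.3.12]{hottbook}. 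Both arguments ultimately hinge on that identity-types-in-truncation lemma, but yours makes the dependence explicit where the paper folds it into an asserted characterization; your proof is correspondingly more self-contained. The truncation-level check you flag (that the family $y\mapsto(\trunq{n+1}\pt=y)$ is $(n{+}1)$-type-valued) is indeed the one place where an off-by-one error could creep in, and your reasoning there is correct.
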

\begin{proof}
  First, consider a property $P:(x,y:A)\to(p:x=y)\to\U$ such that $P\,x\,y\,p$ is a proposition for any $x,y:A$ and $p:x=y$. Since $A$ is pointed connected, the following are equivalent
  \begin{enumerate}[(i)]
  \item $P\,x\,y\,p$ for any $x$, $y$ and $p$,
  \item $P\pt\pt\,p$ holds for any $p:\pt=\pt$.
  \end{enumerate}
  Indeed, the second is a particular case of the first. Conversely, suppose (ii) holds, and that we are given $x$, $y$, and $p$. Since we want to show $P\,x\,y\,p$ which is a proposition, we can suppose given paths $p_x:\pt=x$ and $p_y:\pt=y$ because $A$ is connected, we then deduce that $P\,x\,y\,p$ holds from (ii) by transport.

  Now, we have that $A$ is $(n{+}1)$-connected if and only if $x=y$ is $n$-connected for any $x,y:A$ (see \cref{n-connected}) if and only if $\Loop A$ is $n$-connected (by the preceding observation, based on the fact that being $n$-connected is a proposition \cite[Theorem 7.1.10]{hottbook}) if and only if the pointing map $a=1\to A$ is $n$-connected (because we have $\Loop A=\fib a\pt$). The reasoning is similar for the truncated version.

  An alternative proof for the connected case can be found in~\cite[Lemma 7.5.11]{hottbook}.
\end{proof}

\begin{theorem}
  \label{n-cov-conn}
  Given a pointed connected type $A$, the universal pointed $n$-covering is the $(n{+}1)$-connected pointed $n$-covering of~$A$.
\end{theorem}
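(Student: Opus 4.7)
The plan is to read \cref{n-ucov-conn} (which characterizes the universal $n$-covering as the $n$-connected/$n$-truncated factorization of the pointing map $a\colon 1\to A$) against \cref{n+1-connected} (which, for a pointed connected type, identifies $(n{+}1)$-connectedness with $n$-connectedness of the pointing map). Both implications of the theorem will then be immediate once we check that the relevant total space is pointed connected, so that \cref{n+1-connected} applies.

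For the forward direction, suppose $p\colon\tilde A\to A$ is the universal pointed $n$-covering, arising by \cref{n-ucov-conn} from a factorization $a=p\circ i$ in which $i\colon 1\to\tilde A$ is $n$-connected and $p$ is $n$-truncated. Since $n\geq 0$, the map $i$ is in particular $0$-connected, so every fiber $\pt=_{\tilde A}x$ is inhabited; this exactly says $\tilde A$ is connected. \cref{n+1-connected} applied to the pointing map $i$ then yields that $\tilde A$ is $(n{+}1)$-connected, as required.

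Conversely, let $q\colon B\to A$ be a pointed $n$-covering whose total space $B$ is $(n{+}1)$-connected. Since $(n{+}1)\geq 1\geq 0$, the type $B$ is connected; combined with its pointing, \cref{n+1-connected} (in the other direction) says the pointing map $j\colon 1\to B$ is $n$-connected. Writing $a=q\circ j$ thus exhibits the pointing map of $A$ as an $n$-connected map followed by an $n$-truncated map, and \cref{n-ucov-conn} concludes that $q$ is a universal pointed $n$-covering. The argument is a direct reduction to the two preceding results and I do not anticipate any real obstacle beyond the monotonicity check that $n$-connectedness implies $0$-connectedness when $n\geq 0$.
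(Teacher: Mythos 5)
Your proof is correct and follows essentially the same route as the paper: reducing to \cref{n-ucov-conn} and then applying \cref{n+1-connected}. The only addition is that you explicitly verify the connectedness hypothesis of \cref{n+1-connected} in each direction (from $n\geq 0$ resp.\ $n+1\geq 0$), a routine check the paper leaves implicit.
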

\begin{proof}
  By \cref{n-pcov}, a pointed $n$-covering~$p:B\to A$ is a factorization of the pointing map $a:1\to A$ as $a=p\circ i$ with $p$ $n$-truncated. By \cref{n-ucov-conn}, it is universal if and only if $i:1\to B$ is $n$-connected which, by \cref{n+1-connected} is equivalent to the fact that $B$ is $(n{+}1)$-connected. 
\end{proof}

\begin{remark}
  \label{ucov-1con}
  For $n=0$, we recover the fact that the universal covering is the only 1-connected covering of a type~\cite[Lemma 11]{harper2018covering}.
\end{remark}

We now formalize the intuition that the universal $n$-covering $\tilde A$ of~$A$ provides a way to ``kill'' all the homotopy in dimension $i\leq n+1$. This is based on what we call the \emph{fundamental fibration} associated to the universal $n$-covering:

\begin{theorem}
  \label{ucov-fiber-sequence}
  Writing $\tilde A$ for the universal $n$-covering, we have a fiber sequence
  \[
    \begin{tikzcd}[sep=large]
      \tilde A\ar[r,"p"]&A\ar[r,"\trunq{n+1}{{-}}"]&\trunc{n+1}A
    \end{tikzcd}
  \]
\end{theorem}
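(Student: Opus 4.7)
The plan is direct: unfold the definition of the kernel of $\trunq{n+1}{{-}}$, identify it with $\tilde A$ up to equivalence, and observe that this equivalence respects the projection to~$A$ so that the given pair of maps really is a fiber sequence.

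First, I would write
\[
  \ker\!\bigl(\trunq{n+1}{{-}}\bigr)
  \quad=\quad
  \Sigma(x:A).\bigl(\trunq{n+1}{x}=\trunq{n+1}{\pt}\bigr)
\]
by definition of the kernel at the distinguished point of $\trunc{n+1}A$. Next, I would invoke the standard characterization of identity types in $n$-truncations from~\cite[Theorem 7.3.12]{hottbook}, which, for every $x:A$, provides an equivalence $(\trunq{n+1}{x}=\trunq{n+1}{\pt})\equivto\trunc{n}{\pt=x}$. Applying $\Sigma$-congruence on~$A$ yields
\[
  \ker\!\bigl(\trunq{n+1}{{-}}\bigr)
  \quad\equivto\quad
  \Sigma(x:A).\trunc{n}{\pt=x}\text{.}
\]

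The pointing map $a:1\to A$ has fibers $\fib{a}{x}=\Sigma(u:1).(\pt=x)\equivto(\pt=x)$, so the right hand side is precisely $\im_n a$. By \cref{ucov-image}, $p_n:\im_n a\to A$ is the universal $n$-covering~$\tilde A$, and by construction both the first projection $\im_n a\to A$ and the first projection $\ker(\trunq{n+1}{{-}})\to A$ are ``the map to~$A$'' under the Grothendieck perspective. The equivalence above, being obtained fiberwise, commutes with these projections by the functoriality part of \cref{grothendieck-duality}. This exhibits~$\tilde A$ as the kernel of $\trunq{n+1}{{-}}$ with the projection to~$A$ as the first map, which is precisely the data of a fiber sequence.

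The only mildly delicate point is the compatibility of the fiberwise equivalence with the projections; but since everything is obtained by $\Sigma$-congruence over~$A$, this is automatic, and no genuine obstacle arises. Alternative formulations (\eg directly citing that the fiber of the truncation unit is $n$-connected and identifying it with the $(n{+}1)$-connected cover via \cref{n-cov-conn}) are available, but the computation above is the most economical.
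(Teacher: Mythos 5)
Your proof is correct and follows essentially the same route as the paper's: unfold the kernel, apply the characterization of identity types in truncations~\cite[Theorem 7.3.12]{hottbook}, and identify the result with $\im_n a=\tilde A$ via \cref{ucov-image}. The paper's version is more terse — it elides the intermediate identification $(\pt=x)\equivto\fib a x$ and the compatibility with projections — but the content is identical.
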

\begin{proof}
  We have
  \[
    \ker\trunq{n+1}{{-}}
    \quad\eqdef\quad
    \Sigma(x:A).(\trunq{n+1}{\pt}=\trunq{n+1}{x})
    \quad=\quad
    \Sigma(x:A).\trunc n{\pt=x}
    \quad=\quad
    \tilde A
  \]
  where middle equality is \cite[Theorem 7.3.12]{hottbook} and right one is \cref{ucov-image}.  
\end{proof}

\noindent
This was actually taken to be the definition of $n$-coverings in~\cite{buchholtz2023long}.
As a corollary, we have the following characterization of the homotopy groups of the universal $n$-covering:

\begin{proposition}
  \label{ucov-pi}
  We have $\pi_i(\tilde A)=1$ for $i\leq n+1$ and $\pi_i(\tilde A)=\pi_i(A)$ for $i>n+1$.
\end{proposition}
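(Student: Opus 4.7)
The plan is to derive both clauses from results already available in the excerpt, splitting into the two ranges of~$i$.

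For $i\le n+1$, I would invoke \cref{n-cov-conn}, which asserts that $\tilde A$ is $(n{+}1)$-connected. By definition of connectedness (see the paragraph after the truncation paragraph in \cref{hott}), this means $\trunc{n+1}{\tilde A}=1$, and hence $\trunc i{\tilde A}=1$ for all $i\le n+1$ as well (since any $j$-truncation for $j\le n+1$ factors through the $(n{+}1)$-truncation, being contractible is inherited). Consequently $\pi_i(\tilde A)=\strunc{\Loop^i\tilde A}=1$ for $1\le i\le n+1$.

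For $i>n+1$, the plan is to feed the fiber sequence
\[
  \begin{tikzcd}[sep=large]
    \tilde A\ar[r,"p"]&A\ar[r,"\trunq{n+1}{{-}}"]&\trunc{n+1}{A}
  \end{tikzcd}
\]
of \cref{ucov-fiber-sequence} into the long exact sequence of homotopy groups associated with any pointed fiber sequence (a standard consequence of the path-loop fibration, cf.~\cite[Theorem 8.4.6]{hottbook}). This gives exact sequences
\[
  \pi_{i+1}(\trunc{n+1}{A})\longrightarrow\pi_i(\tilde A)\longrightarrow\pi_i(A)\longrightarrow\pi_i(\trunc{n+1}{A})
\]
for every $i\ge 1$. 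Since $\trunc{n+1}{A}$ is $(n{+}1)$-truncated, both of its groups appearing above are trivial whenever $i>n+1$, so exactness forces the middle map $\pi_i(\tilde A)\to\pi_i(A)$ to be an isomorphism in that range.

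The only subtlety I anticipate is making sure the long exact sequence of homotopy groups is used in the correct form: it requires a pointed fiber sequence of pointed connected spaces, and strictly speaking triviality of the flanking groups only gives exactness at $\pi_i(\tilde A)$ and $\pi_i(A)$, which is enough to read off that the middle map is both injective and surjective, hence a group isomorphism induced by $p$. No nontrivial computation is needed beyond this standard bookkeeping.
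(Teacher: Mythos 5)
Your argument matches the paper's proof: both parts are proved the same way — the $(n{+}1)$-connectedness of $\tilde A$ from \cref{n-cov-conn} handles $i\le n+1$, and the long exact homotopy sequence of the fiber sequence from \cref{ucov-fiber-sequence}, with the flanking $\pi_i\trunc{n+1}A$ and $\pi_{i+1}\trunc{n+1}A$ trivial for $i>n+1$, handles the rest.
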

\begin{proof}
  Since $\tilde A$ is $(n{+}1)$-connected by \cref{n-cov-conn}, we have $\pi_i(\tilde A)=1$ for $i\leq n+1$~\cite[Lemma 8.3.2]{hottbook}. For $i>n+1$, this is a consequence of the long exact sequence induced by the fiber sequence of \cref{ucov-fiber-sequence}, see~\cite[Theorem 8.4.6]{hottbook}:
  \[
    \begin{tikzcd}
      \cdots\ar[r]&\pi_{i+1}\trunc{n+1}A\ar[r]&\pi_i\tilde A\ar[r]&\pi_iA\ar[r]&\pi_i\trunc{n+1}A\ar[r]&\pi_{i+1}\tilde A\ar[r]&\cdots
    \end{tikzcd}
  \]
  We have $\pi_i\trunc{n+1}A=1$ and the short exact sequence
  \[
    \begin{tikzcd}
      1\ar[r]&\pi_i\tilde A\ar[r]&\pi_iA\ar[r]&1
    \end{tikzcd}
  \]
  shows that $\pi_i\tilde A=\pi_iA$.
\end{proof}

\noindent
In particular, this suggests that the universal covering should be contractible when $A$ has no homotopy in dimension $i>n+1$:

\begin{lemma}
  \label{contractible-ucov}
  Given a $(n{+}1)$-truncated pointed type~$A$, its universal $n$-covering is contractible.
\end{lemma}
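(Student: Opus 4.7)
The cleanest route is through the fundamental fibration established in Theorem~\ref{ucov-fiber-sequence}. That theorem provides a fiber sequence
\[
  \begin{tikzcd}[sep=large]
    \tilde A\ar[r,"p"]&A\ar[r,"\trunq{n+1}{{-}}"]&\trunc{n+1}A
  \end{tikzcd}
\]
so $\tilde A$ is, up to equivalence, the kernel of the $(n{+}1)$-truncation map at $\pt$. My plan is to observe that when $A$ is itself $(n{+}1)$-truncated, the map $\trunq{n+1}{{-}}:A\to\trunc{n+1}A$ is an equivalence, since it exhibits $A$ as its own $(n{+}1)$-truncation (by the universal property of truncation). The fiber of an equivalence at any point is contractible, and hence so is $\tilde A$.

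An alternative, slightly more concrete, approach avoids invoking \cref{ucov-fiber-sequence} directly and instead combines the connectivity and truncatedness of $\tilde A$. By \cref{n-cov-conn}, the universal $n$-covering $\tilde A$ is $(n{+}1)$-connected. On the other hand, the covering map $p:\tilde A\to A$ is $n$-truncated by definition, hence $(n{+}1)$-truncated; since $A$ is also assumed $(n{+}1)$-truncated, the total space $\tilde A$ is $(n{+}1)$-truncated as well (truncatedness of the codomain and of the map transfers to the domain). A type that is simultaneously $(n{+}1)$-connected and $(n{+}1)$-truncated must be contractible: its $(n{+}1)$-truncation is both trivial and equivalent to the type itself.

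I expect no serious obstacle here; the statement really is a direct corollary of what precedes it. The only small thing to be careful about is making sure the cited facts are used exactly as stated — in particular, that the fiber of $\trunq{n+1}{{-}}$ at the basepoint is literally the universal $n$-covering as constructed in \cref{ucov-image}, which is what \cref{ucov-fiber-sequence} already records. I would prefer the fiber-sequence proof since it is a one-liner and reuses the theorem established immediately above, keeping the exposition tight.
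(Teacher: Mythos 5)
Both of your proposed arguments are correct, and both differ from the paper's own proof, which goes via the uniqueness of the $n$-connected/$n$-truncated factorization: since $A$ is $(n{+}1)$-truncated, the pointing map $a:1\to A$ is itself $n$-truncated (\cref{n+1-truncated}), and $\id_1$ is $n$-connected since $1$ is contractible, so $a = a\circ\id_1$ already \emph{is} the canonical factorization; by \cref{n-ucov-conn} it must coincide with the universal covering, forcing $\tilde A = 1$. Your first route reads the same fact off the fiber sequence of \cref{ucov-fiber-sequence}: when $A$ is $(n{+}1)$-truncated the unit $\trunq{n+1}{{-}}$ is an equivalence, so its kernel $\tilde A$ is contractible. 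This is perfectly valid and arguably just the ``fiber-side'' restatement of the paper's ``factorization-side'' argument, the two being linked through the image construction of \cref{ucov-image}. Your second route is the more generic one: $\tilde A$ is $(n{+}1)$-connected by \cref{n-cov-conn}, and also $(n{+}1)$-truncated as a $\Sigma$-type over an $(n{+}1)$-type with $(n{+}1)$-truncated fibers, hence contractible. It is a bit longer and leans on closure of $n$-types under $\Sigma$, but it has the virtue of not depending on the fiber sequence and of making explicit the ``connected and truncated at the same level implies contractible'' principle. Either would serve; the first is closer in spirit to the paper's one-liner. One small caveat you should be aware of: \cref{n-cov-conn} and \cref{n+1-truncated} both carry a pointed-connected hypothesis on $A$, which the lemma statement leaves implicit; your arguments inherit this assumption exactly as the paper's does.
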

\begin{proof}
  Since~$A$ is supposed to be $(n{+}1)$-truncated, the pointing map $a:1\to A$ is $n$\nbd-trunca\-ted by \cref{n+1-truncated} and the factorization $a=a\circ\id{1}$
  of the pointing map as an $n$-connected map followed by an $n$-truncated map has to be the factorization of the universal covering (\cref{n-ucov-conn}) by uniqueness.
\end{proof}

We would like to end this section with the following conjecture, which would allow constructing deloopings of higher groups based on the previous construction. Note that, below, the coverings are not supposed to be pointed.

\begin{conjecture}
  Given a connected type~$A$, the connected component of $\tilde A$ in $\Covering[n](A)$ is $\trunc{n+1}A$.
\end{conjecture}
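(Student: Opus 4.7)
The plan is to reduce the statement to a Yoneda-style computation inside $X\defd\trunc{n+1}A$. Since $\Type[n]$ is an $(n{+}1)$-type, the universal property of truncation combined with \cref{covering-maps} yields a chain of equivalences
\[
  \Covering[n](A)
  \quad\equivto\quad
  (A\to\Type[n])
  \quad\equivto\quad
  (X\to\Type[n]).
\]
Under this chain, the universal $n$-covering---which by \cref{ucov-image} corresponds to the family $x\mapsto\trunc n{\pt=x}$---is transported to the family $\hat A:X\to\Type[n]$ defined by $\hat A(y)\defd(\trunq{n+1}{\pt}=y)$; this uses the formula $(\trunq{n+1}{a}=\trunq{n+1}{b})\equivto\trunc n{a=b}$ of~\cite[Theorem 7.3.12]{hottbook}, exactly as in the proof of \cref{ucov-fiber-sequence}. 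Since equivalences of types preserve connected components, it suffices to identify the connected component $C$ of $\hat A$ inside $X\to\Type[n]$.

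The next step is to introduce the representable assignment $\rho:X\to(X\to\Type[n])$, $\rho(x)(y)\defd(x=y)$, which is well-typed because identity types in the $(n{+}1)$-type $X$ are $n$-types. By construction $\rho(\trunq{n+1}{\pt})=\hat A$, and connectedness of $X$ gives $\ptrunc{\trunq{n+1}{\pt}=x}$ for every $x:X$, hence $\ptrunc{\hat A=\rho(x)}$. So $\rho$ factors through $C$ as a map $\bar\rho:X\to C$, and the problem reduces to showing that $\bar\rho$ is an equivalence.

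I would establish this by proving that $\bar\rho$ is both surjective and an embedding. For surjectivity, any $(P,p):C$ provides $p:\ptrunc{\hat A=P}$, and since $\rho(\trunq{n+1}{\pt})=\hat A$ this merely exhibits $\trunq{n+1}{\pt}$ as a preimage of $(P,p)$. For the embedding property, function extensionality and univalence identify $(\rho(x)=\rho(y))$ with the type of families of equivalences $\Pi(z:X).(x=z)\equivto(y=z)$, and the fundamental theorem of identity types applied to the family $y\mapsto\Pi(z:X).(x=z)\equivto(y=z)$, pointed at $y=x$ by the constant identity, shows that this is in turn equivalent to $x=y$. Since an embedding whose fibers are all merely inhabited has contractible fibers, $\bar\rho$ is an equivalence, yielding $X\equivto C$ and establishing the conjecture. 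The main obstacle I foresee is the Yoneda step: although conceptually a direct instance of the fundamental theorem of identity types, it requires verifying that the total space $\Sigma(y:X).\Pi(z:X).(x=z)\equivto(y=z)$ is contractible, which is where the coherence bookkeeping is concentrated.
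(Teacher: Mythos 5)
Note first that the paper does not prove this statement: it is left as an open conjecture, and the remark following it only handles the $n=0$ case, by computing $\Loop$ of the component of $\tilde A$, observing that $\Covering(A)$ is a groupoid, and appealing to uniqueness of deloopings---an argument whose generalization would seem to require a theory of deloopings of higher groups. Your proposal takes a genuinely different and more elementary route that is uniform in~$n$, and as far as I can tell it is correct: transporting across $\Covering[n](A)\equivto(\trunc{n+1}A\to\Type[n])$ turns $\tilde A$ into the representable family $\hat A(y)\defd(\trunq{n+1}{\pt}=y)$, and the fundamental theorem of identity types does the rest. The ``obstacle'' you flag at the end is not actually one. For any $y:X$, evaluation at $(x,\refl)$ gives an equivalence $\bigl(\Pi(z:X).(x=z)\to(y=z)\bigr)\equivto(y=x)$, with inverse sending $p:y=x$ to the family $z\mapsto q\mapsto p\cdot q$; each such map $q\mapsto p\cdot q$ is already an equivalence with inverse $q'\mapsto p^{-1}\cdot q'$, so restricting to families of equivalences changes nothing, and therefore $\Sigma(y:X).\Pi(z:X).(x=z)\equivto(y=z)\equivto\Sigma(y:X).(y=x)$ is contractible, with no residual coherence to verify. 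Two small points worth making explicit in a write-up: $A$ must be pointed (the conjecture's statement omits this, but $\tilde A$ requires a basepoint), and $X\defd\trunc{n+1}A$ is connected because $A$ is and truncation preserves connectedness. With those noted, your argument would close the conjecture.
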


\noindent
Provided that this conjecture holds, the connected component of $\tilde A$ in~$A$ would be a pointed $(n{+}1)$-connected groupoid, which can thus be thought of as a delooping of the fundamental $n$-group of~$A$. The following proposition shows that we have right underlying type (it would remain to be shown that we have the right higher operations for the $n$-group):

\begin{proposition}
  \label{aut-ucov}
  Given a pointed connected type~$A$, the type of automorphisms of the universal $n$-covering~$\tilde A$ is $\trunc n{\Loop A}$.
\end{proposition}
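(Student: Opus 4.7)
The plan is to identify the type of automorphisms of $\tilde A$ with the loop space $\Loop\mathsf{C}$, where $\mathsf{C}$ is the connected component of $\tilde A$ in $\Covering[n](A)$, and then to compute this loop space by comparing pointed and unpointed universal $n$-coverings through a forgetful fibration.

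First, by Grothendieck duality (\cref{grothendieck-duality} and \cref{covering-maps}), self-identifications of $\tilde A$ in $\Covering[n](A)$ correspond exactly to automorphisms of $\tilde A$ as an $n$-covering, so the type in question is $\Loop\mathsf{C}$.

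Second, I would introduce the type~$\mathsf{U}$ of pointed universal $n$-coverings of~$A$. By the universal property (\cref{n-ucov}), such an object is unique up to unique isomorphism, so $\mathsf{U}$ is propositional; by \cref{ucov-image} it is inhabited; hence it is contractible. The forgetful map $U\colon\mathsf{U}\to\mathsf{C}$ sends a pointed universal covering to its underlying (unpointed) covering. I would then compute the fiber of~$U$ over $(B,p)\in\mathsf{C}$: by \cref{n-cov-conn}, universality of a pointed $n$-covering depends only on the underlying space being $(n{+}1)$-connected, so every pointing works and the fiber is the full $\fib p{\pt_A}$. Taking $(B,p)$ to be the concrete model $\tilde A=\Sigma(x:A).\trunc n{\pt=x}$ with~$p$ the first projection, this fiber unfolds to $\Sigma(x:A).\Sigma(q:\trunc n{\pt=x}).(x=\pt_A)$, which contracts along $x=\pt_A$ to $\trunc n{\Loop A}$.

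Third, since $\mathsf{U}$ is contractible and its unique element maps to the basepoint $\tilde A$ of~$\mathsf{C}$, the fiber of $U$ over $\tilde A$ is equivalent to $\Loop\mathsf{C}$; combining with the previous computation yields $\Loop\mathsf{C}\equivto\trunc n{\Loop A}$, as desired. The main obstacle I anticipate is the fiber computation in step two: one must check carefully that ``being a universal pointed $n$-covering'' is, through \cref{n-cov-conn}, a property of the underlying unpointed covering alone, so that the fiber of~$U$ over $(B,p)\in\mathsf{C}$ is the entire $\fib p{\pt_A}$ rather than a proper subtype.
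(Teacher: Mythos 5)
Your proof is correct, but it takes a genuinely different route from the paper. The paper argues directly from the universal property: an automorphism $f:\tilde A\to\tilde A$ of the (unpointed) covering is uniquely determined by $f(\pt)\in\fib p\pt$, because pointing the target at $f(\pt)$ makes $f$ the unique morphism of pointed coverings out of $\tilde A$; then the concrete model $\tilde A=\im_n a$ identifies $\fib p\pt$ with $\trunc n{\Loop A}$. You instead identify the automorphism type with $\Loop\mathsf{C}$ (correct, via Grothendieck duality and univalence), then compute this loop space as the fiber of the forgetful map $\mathsf{U}\to\mathsf{C}$ using the contractibility of $\mathsf{U}$, and finally observe — via \cref{n-cov-conn} — that universality of a pointed $n$-covering is a property of the underlying unpointed covering, so every pointing yields an element of $\mathsf{U}$ and the fiber is the full $\fib p\pt$. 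Both routes end with the same concrete computation of $\fib p\pt$. Your approach carries more overhead: you must establish contractibility of $\mathsf{U}$ (the paper only states unique isomorphism of universal coverings, and turning that into a proposition requires a structure-identity argument for pointed coverings), and you need the fiber-sequence fact that a map from a contractible type has fibers given by the loop space of the target. What your approach buys is a more conceptual picture tied to the paper's subsequent remark about $\Comp(p_0)$ and $G$-torsors: the forgetful map $\mathsf{U}\to\mathsf{C}$ is exactly the kind of universal fibration one uses to exhibit $\mathsf{C}$ as a delooping, so your argument would also be the natural first step toward the conjecture following the proposition, not just the computation of a single loop space.
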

\begin{proof}
  By the universal property of $\tilde A$ (\cref{unviersal-pointed-type}), an automorphism $f:\tilde A\to\tilde A$ is uniquely determined by $f\pt$ which is an element of $\fib p\pt$.
  The type of automorphisms of $\tilde A$ is thus $\fib p\pt$. Now, by \cref{ucov-image}, we can consider that $p$ is the first projection $p:\Sigma(x:A).\trunc n{\fib a x}\to A$ whose fiber at~$\pt$ is $\trunc n{\fib a\pt}$ by \cite[Lemma 4.8.1]{hottbook}, \ie $\trunc n{\Loop A}$.
\end{proof}

\begin{remark}
  Let us explain why the conjecture does hold in the case $n=0$. Writing
  \[
    \Comp(p_0)
    \qeqdef
    \Sigma(p:\Covering(A)).\ptrunc{p_0=p}
  \]
  for the connected component of the universal 0-covering $p_0:\tilde A\to A$, we have
  \[
    \Loop_{p_0}\Comp(p_0)
    =
    \Loop_{p_0}\Covering(A)
    =
    \strunc{\Loop A}
    \eqdef
    \pi_1(A)
  \]
  where the first equality follows from the fact that the canonical projection map from $\Comp(p_0)$ to $\Covering(A)$ is an embedding~\cite[Lemma 7.6.4]{hottbook}, and the second one is due to \cref{aut-ucov}. Moreover, this identity is compatible with the group structures on both sets.

  Given a group~$G$, consider a delooping $A\defd\B G$. By Grothendieck duality, the type of coverings of $\B G$ coincide with maps $\B G\to\Set$ (see \cref{cov-pi-act}), \ie with the type $\GSet$ of sets equipped with an action of~$G$~\cite[Theorem 4]{harper2018covering}.
  Moreover, under this identification, the universal covering corresponds to the principal $G$-set $P_G$, which is the set~$G$ equipped with the canonical action induced by right multiplication. Indeed, $\B G$ being a $1$-truncated pointed type, its universal covering is contractible by \cref{contractible-ucov} and is thus the pointing map~$p:1\to\B G$, and the corresponding map $\phi:\B G\to\Set$ is $\phi\eqdef\fib p\eqdef(x\mapsto\pt=x)$. In particular, we have $\phi(\pt)=(\pt=\pt)=\Loop\B G=G$. Moreover, given an element~$a$ of~$G$, seen as path $a:\pt=\pt$, we have $\transport{(\ap{\phi}a)}(b)=ab$ by the formula for transport in identity types~\cite[Lemma 2.11.2]{hottbook}. We thus have that the connected component of the principal $G$-set
  \[
    \Comp(P_G)
    \eqdef
    \Sigma(X:\GSet).\ptrunc{P_G=X}
  \]
  is a delooping of~$G$. This type is known as the type of \emph{$G$-torsors}~\cite{symmetry,generated,demazure1970groupes,warn2023eilenberg}.
\end{remark}


\section{The Galois correspondence}
\label{galois-correspondence}

\subsection{The Galois fibration}
From now on, we restrict ourselves to the case $n=0$ of $n$-coverings.
In order to define the Galois correspondence, we first need to define the action of the fundamental group $\pi_1(A)$ of a pointed connected type~$A$ on its universal cover~$\tilde A$. This means that we want to define a map $F:\B\pi_1A\to\U$ such that $F\pt=\tilde A$. By the Grothendieck duality (\cref{grothendieck-duality}), this amounts to define a map $f:X\to\B G$, for some type~$X$, whose fiber is $\tilde A$. Moreover, the source~$X$ has to be the homotopy quotient of $\tilde A$ under this action, which is known to coincide with the strict quotient because the action is free, and should thus be $A$ itself by \cref{ucov-quotient}, see \cref{cov-top}. Another important observation, is that we have a very convenient model for~$\B\pi_1A$, namely:

\begin{proposition}
  \label{gtrunc-Bpi1}
  Given a pointed connected type~$A$, we have that $\gtrunc A$ is a $\B\pi_1A$.
\end{proposition}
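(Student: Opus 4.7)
The goal is to exhibit three things: that $\gtrunc{A}$ is a pointed connected groupoid, that its loop space is $\pi_1(A)$, and that the identification of loop spaces respects the group structure. Since the delooping of a group is unique (as recalled in the preamble), establishing these three points is enough.

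The plan is to proceed as follows. First, $\gtrunc{A}$ is a groupoid by definition of the 1-truncation. It inherits a point from~$A$ via the unit $\gtrunq{-}:A\to\gtrunc{A}$ of the truncation, namely $\gtrunq{\pt}$. For connectedness, I would use that $n$-truncation commutes with lower truncations: since $A$ is connected, i.e.\ $\strunc{A}=1$, and $\strunc{\gtrunc{A}}=\strunc{A}$ (by idempotence of truncation applied level-wise, \cite[Theorem 7.3.14]{hottbook}), we get that $\gtrunc{A}$ is connected.

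Second, and this is the conceptual core, I would compute the loop space
\[
  \Loop\gtrunc{A}
  \quad\eqdef\quad
  \gtrunq{\pt}=\gtrunq{\pt}
  \quad\equiv\quad
  \strunc{\pt=\pt}
  \quad\eqdef\quad
  \pi_1(A),
\]
where the middle equivalence is the characterization of identity types in a truncation, \cite[Theorem 7.3.12]{hottbook}, applied at level~$0$: for any pointed $A$ and any $n\geq -1$, one has $\Loop\trunc{n+1}{A}\equivto\trunc{n}{\Loop A}$. Instantiating at $n=0$ with the groupoid truncation gives exactly the identification above.

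Third, I would check that this equivalence is a group isomorphism. The equivalence from \cite[Theorem 7.3.12]{hottbook} sends $\gtrunq{\refl}$ to $\strunq{\refl}$ and is constructed by truncation induction from the map $p\mapsto\strunq{p}$ on underlying loops; since concatenation of paths in $\gtrunc{A}$ is defined via the unit of the truncation and the same holds for the multiplication on $\pi_1(A)\eqdef\strunc{\Loop A}$, compatibility with concatenation is immediate by induction on the truncations. Identities and inverses are handled the same way. The main obstacle in a fully formal write-up is exactly this bookkeeping: pushing the group operations through two nested truncations, which is routine but needs care. Once this is done, $\gtrunc{A}$ is a pointed connected groupoid with loop space group-isomorphic to $\pi_1(A)$, hence a delooping $\B\pi_1A$.
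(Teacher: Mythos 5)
Your proposal is correct and takes essentially the same route as the paper: point $\gtrunc{A}$ by $\gtrunq{\pt}$, deduce connectedness from $\strunc{\gtrunc A}=\strunc A=1$ using the fact that truncations collapse (the paper cites \cite[Lemma 7.3.15]{hottbook}, not 7.3.14, for this), and compute $\Loop\gtrunc A=\strunc{\Loop A}\eqdef\pi_1 A$ via \cite[Theorem 7.3.12 / Corollary 7.3.13]{hottbook}. The one place you go further than the paper is in spelling out that the equivalence of loop spaces is a group isomorphism; the paper silently relies on this, and your explicit note that it follows by truncation induction from compatibility of $\strunq{-}$ with path concatenation is a reasonable amount of extra care, not a divergence in method.
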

\begin{proof}
  We take $\gtrunq\pt$ to be the distinguished point of $\gtrunc A$. Connectedness is preserved by truncation: we have $\strunc{\gtrunc A}=\strunc{A}=1$ (the first equality is \cite[Lemma 7.3.15]{hottbook} and the second one is the fact that $A$ is connected) and thus $\gtrunc A$ is connected. Finally, we have $\Loop\gtrunc A=\strunc{\Loop A}\eqdef\pi_1 A$ by \cite[Corollary 7.3.13]{hottbook}.
\end{proof}

\noindent
The previous discussion suggests defining:

\begin{definition}
  Given a pointed connected type~$A$, the associated \emph{Galois fibration} is the map $\gtrunq{{-}}:A\to\gtrunc A$, which we write $g_A$ in the following.
\end{definition}

\noindent
Its target is $\B\pi_1A$ by \cref{gtrunc-Bpi1} and we have the expected fiber as the case $n=0$ of \cref{ucov-fiber-sequence}:

\begin{proposition}
  \label{universal-covering-fiber-sequence}
  We have $\ker g_A=\tilde A$, \ie we have a fiber sequence
  $
    \begin{tikzcd}[cramped,sep=small]
      \tilde A\ar[r]&A\ar[r,"g_A"]&\B\pi_1A
    \end{tikzcd}
  $.
\end{proposition}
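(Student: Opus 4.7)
The plan is to simply specialise the proof of \cref{ucov-fiber-sequence} to the case $n=0$ and package it cleanly, since everything we need has already been established.

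First I would unfold the definition of the kernel of a pointed map:
\[
  \ker g_A \quad\eqdef\quad \fib{g_A}{\gtrunq{\pt}} \quad=\quad \Sigma(x:A).\bigl(\gtrunq{\pt}=\gtrunq{x}\bigr).
\]
Next I would invoke the standard characterisation of identity types inside an $(n{+}1)$-truncation (\cite[Theorem 7.3.12]{hottbook}) in the case $n=0$: for any $x,y:A$, the type $\gtrunq{x}=\gtrunq{y}$ is equivalent to $\strunc{x=y}$. Applied pointwise, this yields
\[
  \ker g_A \quad=\quad \Sigma(x:A).\strunc{\pt=x}.
\]

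Then I would identify the right-hand side as the universal $0$-covering. By \cref{ucov-image} applied with $n=0$, the universal covering is obtained as the $0$-image of the pointing map $a:1\to A$, namely $\im_0 a \eqdef \Sigma(x:A).\strunc{\fib a x}$. Since $\fib a x = (\pt = x)$, this is exactly $\Sigma(x:A).\strunc{\pt = x} = \tilde A$, which finishes the identification $\ker g_A = \tilde A$. To conclude the fiber sequence statement, it remains to observe that the first projection $\ker g_A \to A$ obtained from the kernel definition agrees under this identification with the structural projection $\tilde A \to A$ of the universal covering, which is immediate since both are first projections on the same $\Sigma$-type.

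There is essentially no obstacle here: the only thing to be slightly careful about is bookkeeping the basepoint and making sure that the identifications above are compatible with the canonical maps into~$A$, so that one really obtains a fiber sequence and not just a chain of equivalences on underlying types. This is transparent because the equivalence provided by \cite[Theorem 7.3.12]{hottbook} is natural in~$x$, so the first-projection maps match on the nose.
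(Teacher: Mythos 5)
Your proposal is correct and takes essentially the same route as the paper: the paper explicitly derives this proposition as the $n=0$ instance of \cref{ucov-fiber-sequence}, whose proof is precisely the chain of identifications you wrote out (unfold $\ker$, apply~\cite[Theorem~7.3.12]{hottbook} to rewrite $\gtrunq{\pt}=\gtrunq{x}$ as $\strunc{\pt=x}$, then invoke \cref{ucov-image}). The extra remark about the first projections matching under the equivalence is a sensible sanity check but adds nothing new.
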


\noindent
As an interesting immediate consequence of this result, we recover the fact that the fibers of the universal covering are the fundamental group:

\begin{proposition}
  We have a fiber sequence $\begin{tikzcd}[cramped,sep=small]\pi_1A\ar[r]&\tilde A\ar[r]&A\end{tikzcd}$.
\end{proposition}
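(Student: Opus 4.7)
The plan is to extend the fiber sequence from the previous proposition one step to the left, using the standard fact that fiber sequences can be continued via the loop space of the base.

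More concretely, I would proceed in two short steps. First, from \cref{universal-covering-fiber-sequence} we have the fiber sequence
\[
  \begin{tikzcd}[cramped,sep=small]
    \tilde A\ar[r,"p"]&A\ar[r,"g_A"]&\B\pi_1A
  \end{tikzcd}
\]
where $p$ is the first projection of $\tilde A\eqdef\Sigma(x:A).\strunc{\pt=x}$ onto $A$ (by \cref{ucov-image} for $n=0$). It then suffices to compute $\ker p=\fib p\pt$. By the standard description of the fiber of a first projection (\cite[Lemma 4.8.1]{hottbook}) we get
\[
  \fib p\pt
  \quad=\quad
  \strunc{\pt=\pt}
  \quad=\quad
  \strunc{\Loop A}
  \quad\eqdef\quad
  \pi_1A,
\]
so the projection $\pi_1A\to\tilde A$ sending a homotopy class of loops~$\ell$ to $(\pt,\ell)$ is precisely the kernel inclusion, which yields the desired fiber sequence.

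An alternative, more conceptual, phrasing is to observe that any fiber sequence $F\to E\to B$ of pointed types extends to the left as $\Loop B\to F\to E$ (this is the beginning of the long fiber sequence; see \cite[Lemma 8.4.4]{hottbook}). Applied to the sequence of \cref{universal-covering-fiber-sequence}, this immediately gives
\[
  \begin{tikzcd}[cramped,sep=small]
    \Loop\B\pi_1A\ar[r]&\tilde A\ar[r]&A
  \end{tikzcd}
\]
and $\Loop\B\pi_1A=\pi_1A$ by the defining property of a delooping (recorded in \cref{hott}). I do not anticipate any real obstacle: both routes are essentially one-line verifications using results already cited in the paper, and the main choice is purely stylistic between the explicit fiber computation and the abstract extension of fiber sequences.
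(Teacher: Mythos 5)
Your proposal is correct, and your second ``alternative'' phrasing --- extending the fiber sequence of \cref{universal-covering-fiber-sequence} to the left by $\Loop\B\pi_1A$ and identifying this with $\pi_1A$ --- is exactly the paper's proof. Your first, explicit computation of $\fib p\pt$ via the formula for fibers of a first projection is a direct unwinding of the same fact and would serve equally well.
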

\begin{proof}
  By \cite[Section 8.4]{hottbook}, the fiber sequence of \cref{universal-covering-fiber-sequence} can be extended on the left by $\Loop\B\pi_1A$, which is $\pi_1A$ by definition of the delooping.
\end{proof}

A careful reader could wonder why the action encoded by \cref{universal-covering-fiber-sequence} is actually the ``right'' one in the sense that it corresponds to the traditional one in topology. We can expect that there are other such actions, \ie maps $f:A\to\B\pi_1A$ with $\ker f=\tilde A$ (what is important here is that $\tilde A$ is 1-connected by \cref{ucov-1con}). In fact, it turns out is only one possible such action, up to an automorphism of~$\B\pi_1A$:

\begin{proposition}
  \label{galois-unique}
  Given a pointed connected type~$A$ and 1-connected map $f:A\to\B\pi_1A$, there is an equivalence $e:\B\pi_1A\to\B\pi_1A$ for which there is a commuting triangle
  \[
    \begin{tikzcd}
      &A\ar[dl,"g_A"']\ar[dr,"f"]&\\
      \B\pi_1A\ar[rr,"e"',"\sim"]&&\B\pi_1A
    \end{tikzcd}
  \]
\end{proposition}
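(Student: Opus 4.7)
The plan is to obtain $e$ from the universal property of the $1$-truncation, and then show it is forced to be an equivalence. By \cref{gtrunc-Bpi1}, the map $g_A:A\to\B\pi_1A$ is the unit of the groupoid truncation, and so is $1$-connected while $\B\pi_1A$ is a $1$-type. Applying the universal property of $\gtrunc A$ to $f$ yields a unique $e:\B\pi_1A\to\B\pi_1A$ such that $e\circ g_A=f$, which furnishes the required triangle.

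Since $e$ is a map between $1$-types, every one of its fibers is automatically a $1$-type. So it suffices to establish that $e$ is $1$-connected: its fibers will then be both $1$-connected and $1$-truncated, hence contractible, and $e$ will be an equivalence. To prove $e$ is $1$-connected I would use the characterization of~\cite[Theorem 7.5.7]{hottbook}: a map $h:X\to Y$ is $n$-connected if and only if, for every family $P:Y\to\Type[n]$, precomposition with $h$ yields an equivalence $\Pi(y:Y).P\,y \equivto \Pi(x:X).P(h\,x)$. For any family $P:\B\pi_1A\to\Type[1]$, precomposition with $f$ is an equivalence (since $f$ is $1$-connected by assumption), and precomposition with $g_A$ applied to the family $P\circ e$ is also an equivalence (since $g_A$ is $1$-connected). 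These two equivalences fit into a commuting triangle whose third side is precomposition with $e$, so precomposition with $e$ is itself an equivalence, and hence $e$ is $1$-connected.

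The only step with real content is the $1$-connectedness of $e$; I expect this to be the main obstacle, but it unfolds smoothly once the extension-along-truncated-families characterization of $n$-connectedness is invoked, as that characterization packages the relevant ``$2$-out-of-$3$'' principle for $n$-connected maps along composable triangles. Everything else reduces to the universal property of $\gtrunc A$ and to the standard fact that a $1$-connected map between $1$-types is an equivalence.
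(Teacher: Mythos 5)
Your proof is correct but takes a genuinely different route in the equivalence step. Both arguments obtain $e$ from the universal property of the groupoid truncation (the paper phrases this via the naturality square $g_{\B\pi_1A}\circ f=\gtrunc{f}\circ g_A$, but this is the same data: a unique map $e$ out of $\gtrunc{A}$ with $e\circ g_A=f$). Where they diverge is in showing $e$ is an equivalence. The paper sets $e:=g_{\B\pi_1 A}^{-1}\circ\gtrunc{f}$, an explicit composite of two equivalences, citing \cite[Lemma 7.5.14]{hottbook} (truncation of an $n$-connected map is an equivalence) for the factor $\gtrunc{f}$ and the fact that $\B\pi_1A$ is a groupoid for the factor $g_{\B\pi_1A}$. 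You instead argue abstractly that $e$ is both $1$-truncated (automatic for a map between $1$-types) and $1$-connected, deriving the latter by a two-out-of-three argument on $f=e\circ g_A$ via the precomposition characterization of connectedness \cite[Theorem 7.5.7]{hottbook}; a $1$-connected, $1$-truncated map has contractible fibers, hence is an equivalence. Both routes are sound. The paper's is a bit shorter because it cites 7.5.14 as a black box; yours is effectively a re-derivation of the relevant case of that lemma (which the HoTT book itself proves from 7.5.7) and has the modest benefit of making explicit the reusable two-out-of-three principle for $n$-connected maps along a commuting triangle.
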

\begin{proof}
  By naturality of truncation~\cite[equation (7.3.4)]{hottbook}, we have a commutative square
  \[
    \begin{tikzcd}
      A\ar[d,"f"']\ar[r,"\gtrunq-"]&\gtrunc{A}\ar[d,"\gtrunc f"]\\
      \B\pi_1A\ar[r,"\gtrunq-"']&\gtrunc{\B\pi_1A}
    \end{tikzcd}
  \]
  By definition, the upper map is $g_A$. The lower map is $g_{\B\pi_1 A}$, which is an equivalence because $\B\pi_1 A$ is a groupoid~\cite[Corollary 7.3.7]{hottbook}. The right vertical map is also an equivalence by \cite[Lemma 7.5.14]{hottbook}, because $f$ is supposed to be 1-connected. Finally, $e\defd g_{\B\pi_1 A}^{-1}\circ\gtrunc{f}$ is an equivalence as a composite of equivalences.
\end{proof}

\noindent
The situation above is a bit subtle. It states that a 1-connected map as~$f$ has to be the Galois fibration. However, this is up to an automorphism of $\B\pi_1A$, which might itself bear some information. Indeed, pointed automorphisms of $\B\pi_1 A$ correspond to group automorphisms of $\pi_1A$ which might be non-trivial. In all the applications below, insights from the corresponding constructions in algebraic topology allow us to make sure that we indeed have the right action.

\subsection{The Galois correspondence}
In algebraic topology, we have seen in \cref{ucov-quotient} that, given a nice pointed space~$A$, there is an action of its fundamental group on its universal covering. This is in fact the basis of a classification of (connected) coverings: those correspond to the subgroups of the fundamental group, see for instance~\cite[Theorem 1.38]{hatcher}. In homotopy type theory, the action is encoded by the Galois fibration (as seen with~\cref{universal-covering-fiber-sequence}) and our aim is now to develop a similar classification of coverings.
In this section, we write $\Covering(A)$ for the coverings of~$A$ which are pointed and \emph{connected}, \ie whose total space is connected. Given a group~$G$, we also write $\Subgroup(G)$ for the type of subgroups of~$G$, \ie injective maps $i:H\to G$ for some group~$H$.

The following two lemmas (Lemma~\ref{subgroup-fiber} and Lemma~\ref{subgroup-fiber}) are used in the subsequent proof of \cref{subgroup-fiber}, which will be useful to show the classification.

\renewcommand{\thelemma}{35a}
\begin{lemma}
  \label{fibers-set}
  Suppose given an injective group morphism $f : G \to H$. Then $\ker \B f$ is a set. 
\end{lemma}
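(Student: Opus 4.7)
The plan is to show that the entire map $\B f : \B G \to \B H$ is $0$-truncated (i.e., has set-valued fibers), from which the lemma follows since $\ker \B f$ is just the fiber at the basepoint. The key tool is the standard ``loop-space shifts truncation level'' fact: a pointed map between pointed connected types is $(n{+}1)$-truncated if and only if its induced map on loop spaces is $n$-truncated. Since $\B G$ and $\B H$ are connected (being deloopings), it suffices to apply this with $n = -1$.

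First I would identify $\Loop \B f : \Loop \B G \to \Loop \B H$, under the isomorphisms $\Loop \B G \isoto G$ and $\Loop \B H \isoto H$ provided by the delooping, with the original group morphism $f : G \to H$. This is the naturality of $\Loop$ with respect to pointed maps, combined with the fact that $\B$ is (by construction, up to isomorphism) an inverse to $\Loop$ on connected groupoids. Second, because $G$ and $H$ are sets (they are underlying types of groups), injectivity of $f$ as a function is equivalent to $f$ being an embedding, i.e., a $(-1)$-truncated map. Third, applying the loop-space shift gives that $\B f$ is $0$-truncated. Finally, since $\ker \B f \eqdef \fib{\B f}\pt$ is by definition a fiber of $\B f$, it is a set.

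The main obstacle, if any, is making precise the identification $\Loop \B f = f$: one should be careful that ``the'' delooping $\B G$ is only determined up to isomorphism, so this equality really takes place after transporting along the chosen group isomorphisms $\Loop \B G \isoto G$ and $\Loop \B H \isoto H$. But since being $(-1)$-truncated is invariant under equivalence, the upshot — that $\Loop \B f$ is an embedding — is unaffected. Alternatively, if one wants a more hands-on argument, one could fix an explicit model of $\B G$ (e.g., the connected component of $G$ in $\Type[1]$, or a $G$-torsor delooping) and verify directly that $\B f$ acts by $f$ on loops; the conclusion is the same.
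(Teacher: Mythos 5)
Your proof is correct, but it takes a genuinely different and more modular route than the paper's. The paper argues directly: it unfolds $\ker \B f$, reduces (by connectedness of $\B G$) to the basepoint, computes the loop space of $\ker\B f$ via the characterization of paths in $\Sigma$-types and transport in identity types, and observes that injectivity of $f$ collapses the result to the contractible type $\Sigma(a:G).(a=1_G)$. You instead invoke a general principle — that a pointed map between pointed connected types is $(n{+}1)$-truncated iff its looping is $n$-truncated — specialized to $n=-1$, together with the identification $\Loop\B f = f$ up to the delooping isomorphisms. This is a clean conceptual packaging: the ``$\Loop$ shifts truncation level'' lemma is the map-level analogue of the paper's own \cref{n+1-truncated} (which treats only the pointing map $1\to A$), and once one has it, the whole proof is a one-liner. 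The trade-off is that the paper does not prove this more general lemma, and its proof would require essentially the same explicit loop-space computation that the paper carries out directly (fiber of $\Loop f$ at $\refl$ equals $\Loop\ker f$, independently of the basepoint of $\ker f$ by cancellation, then reduce by connectedness of $A$ and $B$). So your approach factors the argument through a reusable lemma rather than shortening it; it also makes transparent \emph{why} the result holds, namely that $\B$ is a delooping and injectivity is precisely $(-1)$-truncatedness of a map of sets. You are also right to flag the subtlety about $\Loop\B f = f$ only holding after transport along the chosen isomorphisms $\Loop\B G\isoto G$, $\Loop\B H\isoto H$; the paper handles the same issue implicitly by passing through explicit paths $p_a : \pt=\pt$.
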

\begin{proof}
  We have, by definition:
  \[
    \ker \B f \qeqdef \Sigma(x: \B G). (\pt =\B f(x))
  \]
  Given $(x,p):\ker \B f$, we want to prove that the loop space
  \[
    (x,p) = (x,p)
  \]
  is a proposition (this is enough to show that $\ker \B f$ is a set by~\cite[Theorem 7.2.1]{hottbook}). Since $\B G$ is connected, we merely have a path $q: \pt = x$. Since being a proposition for the loop space is a proposition, we can suppose given such a path $q$, and by path induction it is enough to prove that
  \[ 
    (\pt,p) =(\pt,p)
  \]
  is a proposition, with $p:\Omega \B H$.
  By the characterization of paths in $\Sigma$-types~\cite[Theorem~2.7.2]{hottbook} and transport along identity types~\cite[Theorem~2.11.3]{hottbook}, the above type is equivalent to 
  \[
    \Sigma (a: \pt =\pt).(p \cdot \ap{\B f}(a) = p)
  \]
 We then have the following chain of equivalences: 
 \begin{align*}
   \Sigma (a: \pt =\pt).&(p \cdot \ap{\B f}(a) = p)\\
   &=\Sigma (a: \star =\star).(\ap{\B f}(a) = \refl[\B f(\star)])&&\text{by left simplification by $p$}\\
   &=\Sigma (a: G).(f(a) =_H 1_H)&&\text{we use $\Omega\B G =G$ and $\Omega\B H =H$}\\
   &=\Sigma (a: G).(f(a) =_H f(1_G))&&\text{because $f$ is a morphism}\\
   &=\Sigma (a: G).(a =_G 1_G)&&\text{because $f$ is injective}
 \end{align*}
  This latter type is contractible~\cite[Lemma 3.11.8]{hottbook} and is thus a proposition.
\end{proof}

\renewcommand{\thelemma}{35b}
\begin{lemma}
  \label{fibers-quotient}
  Suppose given a group morphism $i:H\to G$ and a function $f:G\to X$, where~$X$ is a set, and $f$ is invariant by the action of $H$ on $G$ induced by right multiplication, and such that $\fib fx$ is the group~$H$ equipped with its canonical action on itself for every $x:X$. Then~$X=G/H$.
\end{lemma}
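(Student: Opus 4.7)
The plan is to invoke the universal property of the set quotient $G/H$. By the $H$-invariance hypothesis on $f$, the map factors uniquely as $f = \bar f \circ \pi$ through the quotient projection $\pi:G\to G/H$, yielding a map $\bar f : G/H \to X$. Since both $G/H$ and $X$ are sets, proving $X = G/H$ reduces, by univalence, to showing that $\bar f$ is an equivalence; as a map between sets it suffices to verify that $\bar f$ is both injective and surjective.

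Surjectivity is immediate. For any $x:X$, the fiber $\fib f x$ is by assumption equivalent to $H$ and thus inhabited (by the neutral element of~$H$), so $x$ admits a preimage under~$f$, and therefore under~$\bar f$.

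For injectivity, I would assume $\bar f([g]) = \bar f([g'])$, \ie $f(g) = f(g') =: x$. Then $(g,p_g)$ and $(g',p_{g'})$ are both elements of $\fib f x$ for suitable identifications $p_g,p_{g'}$. The hypothesis provides an equivalence of $H$-sets between $\fib f x$ (with the action induced by right multiplication on~$G$) and $H$ acting on itself by right multiplication, which is a transitive action. Hence there exists $h:H$ such that $(g,p_g)\cdot h = (g',p_{g'})$ in $\fib f x$; reading off the first component yields $g\cdot i(h) = g'$, so that $[g] = [g']$ in $G/H$.

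The main obstacle will be the precise interpretation of the hypothesis that ``the fiber is $H$ equipped with its canonical action''. One must read this as an equivalence of $H$-sets between $\fib f x$ and $H$, compatible with the $H$-action induced on the fiber by right multiplication on~$G$, so that transitivity of the canonical action of~$H$ on itself transfers to the fiber. Once this is pinned down, the remainder is a routine application of the universal property of the set quotient.
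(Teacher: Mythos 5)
Your proof is correct and reaches the same conclusion, but it is packaged differently from the paper's. The paper does not construct a comparison map $\bar f : G/H \to X$ and check it is an isomorphism; instead, it shows directly that $f : G \to X$ itself satisfies the universal property of the set-coequalizer of $\fst, \alpha : G\times H \rightrightarrows G$ (of which $G/H$ is the canonical witness), and then concludes $X = G/H$ by uniqueness of objects satisfying a universal property. Concretely, for a coequalizing map $g : G \to Y$ into a set, the paper defines the mediating map $h : X \to Y$ by choosing, for each $x : X$, the preimage corresponding to the neutral element of $H$ under the hypothesized identification $\fib f x \simeq H$, and uses transitivity of the $H$-action together with $H$-invariance of $g$ to show $g$ is constant on each fiber, so that $h$ is well defined with $h \circ f = g$. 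Your approach instead factors $f$ through $\pi : G \to G/H$ to get $\bar f$ and shows $\bar f$ is a bijection of sets: surjectivity from the fiber being inhabited (by the neutral element under the identification with $H$), injectivity from transitivity of the $H$-action transferred to the fiber. The two arguments are really exploiting the same two facts (fibers inhabited; $H$ acts transitively on each fiber via right multiplication on $G$), but your version is somewhat more concrete and makes both directions explicit, whereas the paper's is shorter and more conceptual but leaves uniqueness of the mediating map (which amounts to surjectivity of $f$) implicit. You are also right to flag the interpretation of the hypothesis: for your first-component computation $g\cdot i(h) = g'$ to go through, the identification $\fib f x \simeq H$ must be an equivalence of $H$-sets where $H$ acts on $\fib f x$ via right multiplication on $G$, which is indeed how the paper uses the lemma.
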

\begin{proof}
  Recall that the quotient is the set truncation of the homotopy coequalizer
  \[
    \begin{tikzcd}
      G\times H\ar[r,shift left,"\fst"]\ar[r,shift right,"\alpha"']&G\ar[r,dotted]&G\hq H
    \end{tikzcd}
  \]
  with $\fst$ the first projection and $\alpha$ the right action of~$H$ on~$G$ (we have $\alpha(a,b)\defd a\times i(b)$), \ie $G/H=\strunc{G\hq H}$. Otherwise said, $G/H$ satisfies the same universal property as the above equalizer but restricted to sets.
  Our aim is now to show that $f:G\to X$ satisfies this universal property.
  The hypothesis that $f$ is invariant under the action of~$H$ precisely means that it coequalizes the two maps.
  Suppose given another coequalizing map $g:G\to Y$ with~$Y$ a set. We need to show that there exists a unique map $h:X\to Y$ such that $h\circ f=g$:
  \[
    \begin{tikzcd}
      G\times H\ar[r,shift left,"\fst"]\ar[r,shift right,"\alpha"']&G\ar[r,"f"]\ar[dr,"g"']&X\ar[d,dotted,"h"]\\
      &&Y
    \end{tikzcd}
  \]
  The existence of such an $h$ is implied by the fact that we have, for every $x:X$, an element $a:\fib fx$ such that for every element $b:\fib fx$, we have $g(a)=g(b)$. This condition is satisfied. Indeed, under the identification of $\fib f x$ with $H$, we can take~$a$ to be the neutral element of~$H$, and the fact that the action of~$H$ on itself is transitive and that $g$ preserves this action ensures that any other image will be equal to this one.
\end{proof}

\addtocounter{lemma}{-3}
\stepcounter{lemma}

\renewcommand{\thelemma}{35}
\begin{lemma}
  \label{subgroup-fiber}
  Given a subgroup $i:H\to G$, the fiber of $\B i$ merely is the set $G/H$.
\end{lemma}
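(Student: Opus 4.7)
My approach is to identify the fiber as the orbit space of a natural $G$-action, realizing orbit-stabilizer inside HoTT.

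First, I would check that $F\defd\fib{\B i}{\pt_{\B G}}$ is a set. Unfolding the identity type of $F$ at its canonical basepoint $(\pt_{\B H},\refl)$, one obtains a loop space equivalent to $\Sigma(q:\Loop\B H).(\ap{\B i}{q}=\refl)$, i.e.\ the kernel of $\Loop\B i=i:H\to G$, which is trivial by injectivity of $i$. Since $\B H$ is connected, the loop-precomposition action of $G=\Loop\B G$ on $F$ is transitive on connected components, so vanishing of $\pi_1$ at one point propagates to all of $F$, making $F$ a set.

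Next, I would define $\phi:G\to F$ by $\phi(g)\defd(\pt_{\B H},g)$, viewing $g$ as a loop in $\B G$ and using $\B i(\pt_{\B H})=\pt_{\B G}$. For surjectivity, given any $(x,p):F$, connectedness of $\B H$ merely supplies $q:x=\pt_{\B H}$, and then $(x,p)=\phi(\ap{\B i}{q}^{-1}\cdot p)$; since surjectivity onto a set is a proposition, the \emph{merely} is absorbed. Moreover, by the characterization of equalities in $\Sigma$-types, $\phi(g_1)=\phi(g_2)$ holds iff there exists $q:\Loop\B H$ with $\ap{\B i}{q}^{-1}\cdot g_1=g_2$, equivalently $g_1 g_2^{-1}\in i(H)$. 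Hence $\phi$ descends to an equivalence between $F$ and the set of cosets of $i(H)$ in $G$; by injectivity of $i$ we identify $i(H)\simeq H$, yielding $F\simeq G/H$.

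The main obstacle will be the bookkeeping: keeping the left/right coset conventions straight, and correctly computing the transport of $p:\B i(x)=\pt$ along $q:x=\pt_{\B H}$ in the family $y\mapsto\B i(y)=\pt$, which by \cite[Lemma 2.11.2]{hottbook} yields $\ap{\B i}{q}^{-1}\cdot p$. Beyond this, the argument is essentially orbit-stabilizer realized in HoTT, with transitivity of the $G$-action coming from connectedness of $\B H$ and trivial stabilizer (after quotienting by $i(H)$) coming from injectivity of $i$.
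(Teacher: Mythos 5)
Your proof is correct and shares the paper's core strategy: map $G$ onto the fiber via $g\mapsto(\pt_{\B H},g)$, establish that the fiber is a set from injectivity of~$i$ and connectedness of~$\B H$, show surjectivity, and identify the fiber with a set quotient of~$G$. The paper packages this into two auxiliary lemmas---\cref{fibers-set}, proving $\ker\B i$ is a set by essentially the same loop-space computation you give, and \cref{fibers-quotient}, concluding $X=G/H$ from the coequalizer universal property once an $H$-invariant surjection $G\to X$ is shown to have fibers~$H$. Because the paper only shows the fibers are \emph{merely}~$H$ (via connectedness), its conclusion inherits the ``merely''. You instead compute the kernel pair of $\phi$ as the coset relation $g_1g_2^{-1}\in i(H)$ and invoke effectivity of set quotients, which gives an honest (non-mere) equivalence $F\equivto G/H$---a small strengthening, though the lemma only asks for mere equality. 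Two cosmetic notes: your appeal to transitivity of the $G$-action to propagate the loop-space vanishing is a harmless but slightly roundabout reformulation of the paper's direct use of connectedness of~$\B H$ (the loop-space computation at $(\pt_{\B H},g)$ is independent of~$g$, so the action is not strictly needed); and your transport formula yields right cosets $g_1 g_2^{-1}\in i(H)$, while the paper, orienting the identity as $\pt=\B i\,x$, obtains left cosets $b\cdot i(c)=a$---as you correctly flag, this convention is immaterial since the two quotient sets are canonically isomorphic.
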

\begin{proof}
  Recall that we have
  \[
    \ker\B i
    \qdefd
    \Sigma(x:\B H).(\pt=\B i\,x)
  \]
  We define a map
  \[
    f:G\to\ker\B i
  \]
  by $f(a)\defd(\pt,p_a)$ for $a:G$ corresponding to a path $p_a:\pt=\pt$ in~$\B G$.
  The fibers of this map are
  \begin{align*}
    \fib f(x,p)
    &\defd
    \Sigma(a:G).((x,p)=(\pt,p_a))
    \\
    &=\Sigma(a:G).\Sigma(q:x=\pt).(p\cdot \ap{(\B i)}(q)=p_a)
  \end{align*}
  In particular, for $x\eqdef\pt$, we have $p\defd p_b$ for some $b:G$ and the fiber is
  \begin{align*}
    \fib f(\pt,p_b)
    &=
    \Sigma(a:G).\Sigma(c:H).(b\times i(c)=a)
    \\
    &=
    \Sigma(c:H).\Sigma(a:G).(b\times i(c)=a)
    \\
    &=H
  \end{align*}
  Since $\B H$ is connected, we can deduce that $\fib f(x,p)$ merely is~$H$ for any $(x,p):\ker\B i$. 
  Furthermore, $f$ is invariant under the action of $H$. Given $a:G$ and $b:H$, we have
  \begin{align*}
    f(b\cdot a)
    &\equiv
    (\pt, p_{(b\cdot a)})
    \\
    &\equiv 
    (\pt, p_{(a \times i(b))})
    \\
    &\equiv (\pt, p_{a}\cdot p_{i(b)})
    \\
    &= (\pt, p_{a}) & \text{by transport along $p_{i(b)}^{-1}$ in $\lambda x.(\pt =\B i(x)) $~(\cite[Th. 2.11.3]{hottbook})}
    \\
    & \equiv f(a)
  \end{align*}
  Finally, by \cref{fibers-set}, $\ker\B i$ is a set since $i$ is injective. By \cref{fibers-quotient}, we deduce that $\ker\B i$ merely is~$G/H$.
\end{proof}

\noindent
One of the main results of this paper is the following theorem, see~\cite{covering-agda} for a formalization:

\begin{theorem}
  \label{covering-classification}
  There is an equivalence between subgroups of $\pi_1(A)$ and pointed connected coverings of~$A$:
  \[
    \Subgroup(\pi_1(A))\equivto\Covering(A)
  \]
\end{theorem}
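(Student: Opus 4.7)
The strategy is to route the equivalence through coverings of $\gtrunc A = \B\pi_1(A)$, using the Galois fibration $g_A$ as a pivot.

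First, I would show that $g_A$ induces an equivalence $\Covering(A) \equivto \Covering(\B\pi_1(A))$. By the Grothendieck duality (\cref{grothendieck-duality}), a pointed $0$-covering of~$A$ corresponds to a type family $P: A \to \Set$ together with a distinguished element $p_0: P(\pt)$. Since $\Set$ is a groupoid and $g_A$ is the $1$-truncation, the universal property of truncation gives an equivalence $(\B\pi_1(A) \to \Set) \equivto (A \to \Set)$ by precomposition with $g_A$, and this respects the pointed data because $g_A$ preserves the basepoint. On the total-space side, this equivalence reads $\tilde B \mapsto A \times_{\gtrunc A} \tilde B$ in one direction and $B \mapsto \gtrunc B$ in the other; both round trips follow from the fact that, for a $0$-truncated $p: B \to A$, the naturality square of the $1$-truncation is a pullback (a fiberwise computation using \cite[Theorem 7.3.12]{hottbook} combined with $\fib p\,a$ being already $1$-truncated). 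Connectedness of the total space is preserved in both directions: $\gtrunc{(-)}$ preserves connectedness, and pullback along $g_A$ does too because the fibers of $g_A$ are $\tilde A$, which is $1$-connected by \cref{universal-covering-fiber-sequence} and in particular inhabited and $0$-connected.

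Second, I would identify $\Covering(\B G)$ (for $G \defd \pi_1(A)$) with $\Subgroup(G)$. Any pointed connected $0$-covering $\tilde p: \tilde B \to \B G$ has $\tilde B$ a pointed connected $1$-type (truncation levels add along fibrations), so $\tilde B$ is a delooping $\B K$ for $K \defd \pi_1 \tilde B$, and $\tilde p$ is the delooping $\B\phi$ of some group homomorphism $\phi: K \to G$. The long fiber sequence of $\fib{\B\phi}\pt \to \B K \to \B G$ gives $\Loop\fib{\B\phi}\pt = \ker\phi$, and since the fiber is in any case $1$-truncated, $\B\phi$ is $0$-truncated iff $\ker\phi = 1$, \ie iff $\phi$ is a subgroup inclusion. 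Conversely, a subgroup $i: H \hookrightarrow G$ gives the pointed connected $0$-covering $\B i: \B H \to \B G$, whose fiber is a set by \cref{subgroup-fiber}. Composing the two equivalences yields $\Subgroup(\pi_1(A)) \equivto \Covering(A)$.

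The main obstacle is the bookkeeping of the first step: setting up the equivalence so that it respects the pointings and the connectedness condition as an equivalence of types (not merely a bijection of iso-classes) requires carefully tracking the Grothendieck duality through the pointed and connected slices, and in particular verifying that the naturality square of $1$-truncation is a pullback for every $0$-covering. Once this is in place, the second step is essentially the classical orbit-stabilizer correspondence transported across deloopings.
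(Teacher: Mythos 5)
Your proposal is correct, and it takes a genuinely different route from the paper. The paper's proof builds the two transports explicitly---pulling back $g_A$ along $\B i$ to go from subgroups to coverings, and taking $\pi_1$ of the covering map to go back---and then verifies the two round trips by a somewhat hands-on fiberwise computation (constructing an explicit inverse $e'$ to the comparison map $e : X \to C_{\pi_1 X}$). Your proof instead factors the equivalence as $\Covering(A) \equivto \Covering(\B\pi_1 A) \equivto \Subgroup(\pi_1 A)$, where the first leg is precomposition along $g_A$ under the universal property of $1$-truncation (since $\Set$ is a groupoid), and the second leg is a purely algebraic classification of $0$-coverings of a delooping. This is more modular and arguably more conceptual: it isolates the ``groupoid case'' $\B G$ as the essential content and reduces the general case to it. The price is the lemma you flag, namely that for a $0$-covering $p:B\to A$ the $1$-truncation naturality square is a pullback. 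That lemma is correct, but the justification you sketch (``fiberwise via~\cite[Theorem~7.3.12]{hottbook}'') would be delicate to carry out directly, because it requires eliminating out of $\gtrunc B$ into~$\U$. A cleaner argument: under Grothendieck duality write $\fib p = Q\circ g_A$ with $Q:\gtrunc A\to\Set$, observe $\Sigma\gtrunc A.Q$ is a groupoid, and that the canonical map $B = \Sigma A.(Q\circ g_A)\to\Sigma\gtrunc A.Q$ has fibers equivalent to $\tilde A$, hence is $1$-connected, hence is the truncation map---which simultaneously gives the pullback and identifies $\gtrunc B$. (This is essentially what the paper's round-trip computation achieves by hand.) Two small slips: the total space is $1$-truncated because truncation levels take the \emph{maximum} along a fibration, not the sum; and $1$-connectedness of $\tilde A$ is \cref{n-cov-conn}/\cref{ucov-1con} rather than \cref{universal-covering-fiber-sequence}.
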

\begin{proof}
  Given a subgroup~$i:G\into\pi_1(A)$, the corresponding covering~$X$ is obtained as pullback of the Galois fibration along the delooping of group inclusion:
  \[
    \begin{tikzcd}
      C_G\ar[d,"p_G"']\ar[r]\ar[dr,phantom,pos=0,"\lrcorner"]&\B G\ar[d,"\B i"]\\
      A\ar[r,"g_A"']&\B\pi_1(A)
    \end{tikzcd}
  \]
  We need to show that $C_G$ is a connected covering, \ie it is connected and $p_G$ has $0$-truncated fibers.
  The fiber of the covering is given by pasting pullbacks
  \[
    \begin{tikzcd}
      F_G\ar[d]\ar[r]\ar[dr,phantom,pos=0,"\lrcorner"]&C_G\ar[d,"p_G"']\ar[r]\ar[dr,phantom,pos=0,"\lrcorner"]&\B G\ar[d,"\B i"]\\
      1\ar[r,"\pt"']&A\ar[r,"g_A"']&\B\pi_1(A)
    \end{tikzcd}
  \]
  and thus we have
  \[
    F_G\eqdef\ker{p_G}=\ker{\B i}=\pi_1(A)/G
  \]
  by \cref{subgroup-fiber} (to be precise, we only have the existence of such an equality, which is sufficient for our purposes exposed in next sentence).
  As a consequence, the fibers of~$p_G$ are sets. Also, by vertical pasting of pullbacks,
  \[
    \begin{tikzcd}
      \tilde A\ar[d]\ar[r]\ar[dr,phantom,pos=0,"\lrcorner"]&1\ar[d]\\
      C_G\ar[d,"p_G"']\ar[r]\ar[dr,phantom,pos=0,"\lrcorner"]&\B G\ar[d,"\B i"]\\
      A\ar[r,"g_A"']&\B\pi_1(A)
    \end{tikzcd}
  \]
  we have that the upper square is a pullback, \ie a fibration sequence of the form
  \[
    \begin{tikzcd}
      \tilde A\ar[r]&C_G\ar[r]&\B G
    \end{tikzcd}
  \]
  which describes, by action-fibration duality, an action of $G$ on $\tilde A$ such that
  \[
    C_G=\tilde A\hq G
  \]
  Therefore, $C_G$ is connected as a homotopy quotient of a connected space.

  Conversely, given a connected covering $f:X\to A$, we have
  \[
    \pi_1(f):\pi_1(X)\to\pi_1(A)
  \]
  This is a mono, because we have the long exact sequence associated to the fibration:
  \[
    \begin{tikzcd}
      \pi_1(F)\ar[r]&\pi_1(X)\ar[r,"\pi_1(f)"]&\pi_1(A)
    \end{tikzcd}
  \]
  where $F\defd\ker{f}$ is a set (because $f$ is covering) and thus $\pi_1(F)=1$.

  We now have to show that these two operations are mutually inverse. Given a connected covering $f:X\to A$, the associated connected covering (by performing the two operations) is obtained as the pullback on the left, which can be rewritten as on the right
  \[
    \begin{tikzcd}
      C_{\pi_1 X}\ar[d,"p_{\pi_1X}"']\ar[r]\ar[dr,phantom,pos=0,"\lrcorner"]&\B\pi_1X\ar[d,"\B\pi_1f"]\\
      A\ar[r,"g_A"']&\B\pi_1A
    \end{tikzcd}
    \qquad\qquad\qquad\qquad
    \begin{tikzcd}
      C_{\pi_1 X}\ar[d,"p_{\pi_1X}"']\ar[r]\ar[dr,phantom,pos=0,"\lrcorner"]&\gtrunc X\ar[d,"\gtrunc f"]\\
      A\ar[r,"\gtrunq-"']&\gtrunc A
    \end{tikzcd}
  \]
  The outer square of the diagram below commutes by naturality of the unit of the truncation, and we thus have a universal map
  $
  e:X\to C_{\pi_1X}
  $
  such that
  \[
    \begin{tikzcd}
      X\ar[ddr,bend right,"f"']\ar[drr,bend left,"\gtrunq-"]\ar[dr,"e"]&\\
      &C_{\pi_1X}\ar[d,"p_{\pi_1X}"']\ar[r]\ar[dr,phantom,pos=0,"\lrcorner"]&\gtrunc X\ar[d,"\gtrunc f"]\\
      &A\ar[r,"\gtrunq-"']&\gtrunc A
    \end{tikzcd}
  \]
  More explicitly,
  \[
    C_{\pi_1X}
    \qeqdef
    \Sigma(a:A).\Sigma(y:\gtrunc X).(\gtrunq{a}=\gtrunc{f}(y))
  \]
  and
  $
    e(x)
    =
    (f(x),\gtrunq x,\refl[\gtrunq{f(x)}])
  $.
  Because the lower-left triangle commutes, $e$ is a map between types over $A$, and thus corresponds by Grothendieck duality (\cref{grothendieck-duality}) to a family of fiberwise maps
  \begin{align*}
    e_a:\fib fa&\to\fib{p_{\pi_1X}}a\\
    (x,p)&\mapsto(f(x),\gtrunq x,\refl,p)
  \end{align*}
  indexed by $a:A$, with $p:a=f(x)$. Let us consider the case $a\eqdef\pt{}$. We have
  \[
    \fib{p_{\pi_1X}}\pt\qdefd\Sigma(a:A).\Sigma(y:\gtrunc X).\Sigma(q:\gtrunq a=\gtrunc f(y)).(\pt=a)
  \]
  We can construct an inverse map to $e_{\pt}$
  \begin{align*}
    e'_\pt:\fib{p_{\pi_1X}}\pt&\to\fib f\pt\\
    (a,y,q,p)&\mapsto(x,p\pcomp\tilde q)
  \end{align*}
  where we assume that $y=\gtrunq{x}$ because we are eliminating to $\fib f\pt$ which is a set (and thus a groupoid). Above, we have $p:\gtrunq a=\gtrunc f(x)$, which is equivalent to $\strunc{a=f(x)}$ (by~\cite[Theorem 7.3.12]{hottbook}) and we can thus suppose that $q=\strunq{\tilde q}$ with $\tilde q:a=f(x)$ because we are eliminating to a set.
  We have
  \[
    e'\circ e(x,p)
    =
    (x,p\pcomp\refl)
    =
    (x,p)
  \]
  Conversely,
  \[
    e\circ e'(a,\gtrunq x,\strunq q,p)
    =
    e(x,p\pcomp q)
    =
    (f(x),\gtrunq x,\refl,p\pcomp q)
  \]
  (we can suppose that the second and third arguments are truncations as above, because we are eliminating to a set). We thus have to show
  \[
    (f(x),\gtrunq x,\refl,p\pcomp q)
    =
    (a,\gtrunq x,\strunq q,p)
  \]
  with $p:\pt=a$ and $q:a=f(x)$.
  Abstracting over $a$, we can suppose that $q$ is $\refl$ by J, and we conclude immediately.

  Conversely, given a subgroup $i:G\into\pi_1A$, the associated subgroup (by performing the two transformations) is
  \[
    \pi_1(p_G):\pi_1C_G\to\pi_1A
  \]
  we thus have to show that $\pi_1C_G=G$ and the map~$\pi_1 p_G=i$ (note that to be precise, we should identify the sources and the targets of the maps up to equality). We have a pullback square
  \[
    \begin{tikzcd}
      C_G\ar[d,"p_G"']\ar[r]\ar[dr,phantom,pos=0,"\lrcorner"]&\B G\ar[d,"\B i"]\\
      A\ar[r,"g_A"']&\B\pi_1A
    \end{tikzcd}
  \]
  which can be extended as (see above)
  \[
    \begin{tikzcd}
      \tilde A\ar[d]\ar[r]\ar[dr,phantom,pos=0,"\lrcorner"]&1\ar[d]\\
      C_G\ar[d,"p_G"']\ar[r]\ar[dr,phantom,pos=0,"\lrcorner"]&\B G\ar[d,"\B i"]\\
      A\ar[r,"g_A"']&\B\pi_1(A)
    \end{tikzcd}
  \]
  The map $C_G\to\B G$ is 1-connected because the fiber is $\tilde A$ which is 1-connected by \cref{n-cov-conn}. By \cite[Lemma 7.5.14]{hottbook}, it thus induces an equivalence $\gtrunc{C_G}\equivto\gtrunc{\B G}$, and thus $\pi_1(C_G)=\pi_1(\B G)=G$. We should have the fact that $\pi_1p_G=i$ similarly, by applying $\pi_1$ to the above square (which is not anymore a pullback but remains commutative).
\end{proof}

\begin{example}
  Consider the case $A\eqdef\S1$. The associated Galois fibration is
  \[
    \begin{tikzcd}
      \S1\ar[r,"\gtrunq-"]&\gtrunc{\S1}\eqdef\B\pi_1\S1\equivto\B\Z
    \end{tikzcd}
  \]
  and is an equivalence, because $\S1$ is a groupoid. The subgroups of $\Z$ are of the form $i_n:\Z\to\Z$ with $i_n(k)=n\times k$ with $n>0$ or $i_0:0\to\Z$. And thus (connected) coverings of the circle are obtained as pullbacks of the form
  \[
    \begin{tikzcd}
      C_n\ar[d,"p_n"']\ar[r]\ar[dr,phantom,pos=0,"\lrcorner"]&\B\Z\ar[d,"\B i_n"]\\
      \S1\ar[r,"g_{\S1}"']&\B\Z
    \end{tikzcd}
  \]
  Since the lower arrow is an equivalence, the pullback $C_n$ is $\B\Z$, \ie $\S1$. And $p_n:\S1\to\S1$ is the pointed map of degree $n$, sending the loop to loop$\null^n$.
\end{example}


\section{Applications}
\label{applications}
\subsection{Coverings of lens spaces}
Lens spaces were defined in homotopy type theory by the authors of this paper in~\cite{lens}. We briefly recall here their definition. Given natural numbers $l$ and $n$ with $l$ relatively prime to $n$, we write
\[
  \phi_n^l:\S1\to\B\Z_n
\]
for the pointed map sending $\Sloop$ to the loop in $\B\Z_m$ corresponding to $l:\Z_m$. We have a fiber sequence
\[
  \begin{tikzcd}
    \S1\ar[r]&\S1\ar[r,"\phi_n^l"]&\B\Z_n
  \end{tikzcd}
\]
which is obtained by delooping the exact sequence
$
  \begin{tikzcd}[cramped]
    \Z\ar[r,"-\times n"]&\Z\ar[r,"-\times l"]&\B\Z_n
  \end{tikzcd}
$,
see \cite[Section 6.2]{lens}.

\begin{definition}
  Given a sequence $l_1,\ldots,l_k$ of natural numbers all prime with $n$, the associated \emph{lens space} $L_n^{l_1,\ldots,l_k}$ is the source of the map
  \[
    \phi_n^{l_1}\join\ldots\join \phi_n^{l_k}:L_n^{l_1,\ldots,l_k}\to\B\Z_n
  \]
  which we simply write as $\phi_n^{l_1,\ldots,l_k}$ in the following.
\end{definition}

\noindent
It can be shown that the fiber of this map is $\S{2k-1}$~\cite[Section 6.2]{lens}, which is thus $2k-2$ connected. In particular, we have that $\pi_1L_n^{l_1,\ldots,l_k}=\pi_1\B\Z_n=\Z_n$.

We now classify covering spaces of lens spaces. We have seen in \cref{covering-classification} that they correspond to subgroups of $\Z_n$, which are the $\Z_m$ with $m\divides n$. Given such a $\Z_m$, the corresponding covering~$C_m$ is obtained by taking the inclusion $i_m:\Z_m\to\Z_n$, delooping it, and pulling back along the Galois fibration:
\[
  \begin{tikzcd}[column sep=large]
    C_m\ar[d,"p_m"']\ar[r]\ar[dr,phantom,pos=0,"\lrcorner"]&\B\Z_m\ar[d,"\B i_m"]\\
    L_n^{l_1,\ldots,l_k}\ar[r,"g_{L_n^{l_1,\ldots,l_k}}"']&\B\Z_n
  \end{tikzcd}
\]
In order to perform computations, we first note that, by \cref{galois-unique}, we can replace the map $g_{L_n^{l_1,\ldots,l_k}}$ at the bottom by $\phi_n^{l_1,\ldots,l_k}$ (up to an automorphism of the target).



\begin{proposition}
  \label{lens-pullback}
  For natural numbers $l,m,n,p$ with $l$ prime with $n$, and $n=mp$, we have a pullback square
  \[
    \begin{tikzcd}
      \S1\ar[d,"s_p"']\ar[r,"\phi^l_m"]\ar[dr,phantom,pos=0,"\lrcorner"]&\B\Z_m\ar[d,"\B i_m"]\\
      \S1\ar[r,"\phi_n^l"']&\B\Z_n
    \end{tikzcd}
  \]
  where the vertical map~$s_p$ sends $\Sloop$ to $\Sloop^p$.
\end{proposition}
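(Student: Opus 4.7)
My plan is to exhibit $\B i_m$ itself as a pullback and then apply pullback pasting. Since $n=mp$, identifying $\Z_m$ with the subgroup of $\Z_n$ generated by $p$, the inclusion $i_m$ fits into a short exact sequence of abelian groups $0\to\Z_m\xrightarrow{i_m}\Z_n\xrightarrow{j}\Z_p\to 0$. Delooping yields a fiber sequence $\B\Z_m\to\B\Z_n\xrightarrow{\B j}\B\Z_p$, \ie $\B i_m$ is the pullback of the pointing map $1\to\B\Z_p$ along $\B j$; this delooping of a short exact sequence of abelian groups is the key external input, and is implicit throughout the treatment of lens spaces in~\cite{lens}.

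Granting this, by pullback pasting on the right of the stated square, it suffices to show that the outer rectangle
\[
  \begin{tikzcd}
    \S1\ar[d,"s_p"']\ar[r]&1\ar[d]\\
    \S1\ar[r,"\B j\circ\phi_n^l"']&\B\Z_p
  \end{tikzcd}
\]
is a pullback, \ie that the fiber of $\B j\circ\phi_n^l:\S1\to\B\Z_p$ is $\S1$ with projection $s_p$. The composite $\B j\circ\phi_n^l$ sends $\Sloop$ to the loop corresponding to $j(l)=l\bmod p\in\Z_p$; since $\gcd(l,n)=1$ and $p\divides n$ force $\gcd(l,p)=1$, this composite is the map $\phi_p^l$ (with $l$ read modulo~$p$). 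The fiber sequence $\S1\xrightarrow{s_p}\S1\xrightarrow{\phi_p^l}\B\Z_p$ recalled at the start of this section then provides exactly the desired fiber.

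It remains to identify the induced top map $\S1\to\B\Z_m$ with $\phi_m^l$. Pointed maps $\S1\pto\B\Z_m$ are classified by the image of $\Sloop$ in $\Loop\B\Z_m=\Z_m$, and post-composition with $\B i_m$ acts as $i_m:\Z_m\to\Z_n$ on this classification. Commutativity of the square forces $\B i_m\circ(\text{top})=\phi_n^l\circ s_p$, which sends $\Sloop$ to $lp\in\Z_n$; since $i_m(l)=lp$ and $i_m$ is injective, the top map must send $\Sloop$ to $l\in\Z_m$, and is therefore $\phi_m^l$. The main obstacle I anticipate is the delooping of the short exact sequence to a fiber sequence; once that is granted, the rest is pullback pasting together with a small computation on $\Sloop$.
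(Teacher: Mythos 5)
Your proof is correct, but it follows a genuinely different route from the paper's. The paper forms the pullback~$X$ directly, observes that loop spaces preserve pullbacks, and then computes $\Loop X=\Sigma((a,b):\Z\times\Z_m).(al=_{\Z_n}bp)$, exhibiting an explicit isomorphism $\Z\isoto\Loop X$ sending $1\mapsto(p,l)$ and verifying injectivity and surjectivity by elementary divisibility arguments; connectedness of~$X$ is checked separately. Your approach is more structural: you exhibit $\B i_m$ itself as the fiber of $\B j:\B\Z_n\to\B\Z_p$ by delooping the short exact sequence $0\to\Z_m\to\Z_n\to\Z_p\to 0$, and then reduce the stated square, by pullback pasting, to the already-known fiber sequence $\S1\xrightarrow{s_p}\S1\xrightarrow{\phi_p^l}\B\Z_p$ (which is the same delooped exact sequence $\Z\xrightarrow{\times p}\Z\to\Z_p$ that the paper invokes at the start of the section). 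The number theory you need is just $\gcd(l,n)=1$ and $p\divides n$ implies $\gcd(l,p)=1$, which is lighter than the paper's divisibility bookkeeping, and connectedness of the top-left corner comes for free since it is $\S1$ on the nose. The trade-off is that you import one more piece of machinery — delooping the SES of cyclic groups into a fiber sequence of deloopings — whereas the paper's computation is self-contained modulo the fact that $\Loop$ preserves pullbacks. One small presentational remark: you could streamline by first checking that the left square commutes on $\Sloop$ (both composites send it to $lp\in\Z_n$), after which pasting immediately gives that the left square is a pullback; your final paragraph identifying the induced top map via injectivity of $\Loop\B i_m$ then becomes a sanity check rather than a needed step.
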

\begin{proof}
  We write~$X$ for pullback of $\phi_n^l$ and $\B i_m$ as shown on the left below:
  \[
    \begin{tikzcd}[column sep=large]
      X\ar[d,"(\phi_n^l)^*\B i_m"']\ar[r,"(\B i_m)^*\phi_n^l"]\ar[dr,phantom,pos=0,"\lrcorner"]&\B\Z_m\ar[d,"\B i_m"]\\
      \S1\ar[r,"\phi_n^l"']&\B\Z_n
    \end{tikzcd}
    \qquad\qquad\qquad\qquad
    \begin{tikzcd}
      \Loop X\ar[d]\ar[r]\ar[dr,phantom,pos=0,"\lrcorner"]&\Z_m\ar[d,"i_m"]\\
      \Z\ar[r,"-\times l"']&\Z_n
    \end{tikzcd}
  \]
  The type $X$ is pointed (because both maps $\phi_n^l$ and $\B i_m$ are) and connected (this can be shown as in \cite[Lemma 26]{lens}). Finally, since loop spaces commute to pullbacks because they are right adjoints, we have a pullback square of groups as on the right above. From this, we can deduce that $\Loop X\isoto\Z$ as follows. The preceding pullbacks means that we have
  \[
    \Loop X
    \quad=\quad
    \Sigma((a,b):\Z\times\Z_m).(al=_{\Z_n}bp)
  \]
  We write $f:\Z\to\Loop X$ for the map sending $1$ to $(p,l)$, and we claim that this is an isomorphism.

  First, $f$ is injective because $(p,l)$ is free. Indeed, suppose that $f(x)\eqdef(xp,xl)=0$. In particular, $xp=_\Z 0$ and since $p\neq 0$ we have $x=0$.
  Second, $f$ is surjective. Fix $(a,b):\Loop X$. There is $y\in\Z$ such that $al=bp+yn=p(b+ym)$. Therefore $p\divides al$, but $l$ is prime with $n$ and $p\divides n$ so that $l$ is prime with $p$ and thus $p\divides a$. There thus exists $x$ such that $a=px$. Thus $pxl-bp=_{\Z_n}0$, \ie $p(xl-b)=_{\Z_n}0$, that is $mp\defd n\divides p(xl-b)$ and thus $m\divides xl-b$, \ie $b=_{\Z_m}xl$. Finally, $(a,b)=(xp,xl)\eqdef f(x)$ and $f$ is surjective.
  The maps of the pullback are the projections from~$\Loop X$ and are thus the expected ones.
\end{proof}

\begin{theorem}
  For $l_1,\ldots,l_k$ natural numbers prime with $n$, and $k>1$, the connected coverings of $L_n^{l_1,\ldots,l_k}$ are precisely the $L_m^{l_1,\ldots,l_k}$ with $m\divides n$, and the projections maps are given by $s_p\join_{\B\Z_n}\ldots\join_{\B\Z_n}s_p$.
\end{theorem}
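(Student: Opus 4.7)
By the Galois correspondence (\cref{covering-classification}), the connected coverings of $L_n^{l_1,\ldots,l_k}$ correspond to subgroups of $\pi_1 L_n^{l_1,\ldots,l_k}=\Z_n$, namely the $i_m:\Z_m\into\Z_n$ for $m\divides n$. Writing $n=mp$, the covering $C_m$ associated with $\Z_m$ is obtained as the pullback of $\phi_n^{l_1,\ldots,l_k}$ along $\B i_m$, where we have used \cref{galois-unique} to replace the Galois fibration by $\phi_n^{l_1,\ldots,l_k}$ (which is indeed $1$-connected for $k>1$, since its fiber $\S{2k-1}$ is $2$-connected).

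The key step is that pullback preserves fiberwise joins: given maps $f_i:X_i\to B$ and $g:A\to B$, the canonical comparison map
\[
  g^*(f_1\join_B\cdots\join_B f_k)\longrightarrow g^*f_1\join_A\cdots\join_A g^*f_k
\]
is an equivalence of types over $A$. This holds because the binary join $f_1\join_B f_2$ is defined as the pushout in $\U/B$ of the span $f_1\leftarrow f_1\times_B f_2\to f_2$, and the pullback functor $g^*:\U/B\to\U/A$ preserves pushouts and finite products (being both a left and a right adjoint on slices), hence preserves joins.

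Applying this with $g\defd\B i_m$ and using \cref{lens-pullback} to compute each factor $(\B i_m)^*\phi_n^{l_i}=\phi_m^{l_i}$, with vertical component $s_p:\S1\to\S1$, the pullback $C_m$ is the fiberwise join $\phi_m^{l_1}\join_{\B\Z_m}\cdots\join_{\B\Z_m}\phi_m^{l_k}$, whose source is $L_m^{l_1,\ldots,l_k}$ by definition, and the projection $C_m\to L_n^{l_1,\ldots,l_k}$ is the iterated join $s_p\join_{\B\Z_n}\cdots\join_{\B\Z_n}s_p$ of vertical components. Connectedness of the covering, required by the Galois correspondence, follows since $L_m^{l_1,\ldots,l_k}$ is connected (by the same argument as in the construction of lens spaces themselves). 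The main obstacle is establishing the join-preservation lemma; once this is in hand, the theorem reduces to a direct componentwise application of \cref{lens-pullback}.
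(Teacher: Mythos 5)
Your proof follows essentially the same route as the paper's and is correct: both reduce to the componentwise computation of \cref{lens-pullback} by invoking the Galois correspondence (\cref{covering-classification}), replacing the Galois fibration by $\phi_n^{l_1,\ldots,l_k}$ via \cref{galois-unique}, and using the fact that pullback commutes with fiberwise joins. The paper cites this last fact as \cite[Theorem 25]{lens}, while you sketch a conceptual reason---$g^*$ on slices, being both a left and a right adjoint, preserves pushouts and finite products---which is sound in spirit, though the pushout-preservation (descent) is itself a nontrivial HoTT result that deserves more than a one-line justification. A small plus: you explicitly verify the $1$-connectedness hypothesis needed to apply \cref{galois-unique}, namely that the fiber $\S{2k-1}$ of $\phi_n^{l_1,\ldots,l_k}$ is at least $2$-connected for $k>1$, a check the paper leaves implicit.
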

\begin{proof}
  By \cref{covering-classification}, the coverings of $L_n^{l_1,\ldots,l_k}$ correspond to subgroups of $\pi_1L_n^{l_1,\ldots,l_k}$, \ie the subgroups of $\Z_n$, and those are of the form $\Z_m$ for $m\divides n$. More precisely, by \cref{covering-classification}, given such a subgroup $i_m:\Z_m\to\Z_n$, the corresponding covering space is given by the pullback of the delooping this map along the Galois fibration:
  \[
    \begin{tikzcd}[column sep=large]
      X\ar[d]\ar[r]\ar[dr,phantom,pos=0,"\lrcorner"]&\B\Z_m\ar[d,"\B i_m"]\\
      L_n^{l_1,\ldots,l_k}\ar[r,"g_{L_n^{l_1,\ldots,l_k}}"']&\B\Z_n
    \end{tikzcd}
  \]
  As remarked in~\cref{galois-unique}, we can replace the bottom map by $\phi_n^{l_1,\ldots,l_k}$. By \cite[Theorem 25]{lens}, pullback commute with joins of maps, so the map $X\to \B\Z_m$ can be computed as the iterated join of the maps obtained by pulling back~$\phi_n^{l_i}$ along $\B i_m$, which, by \cref{lens-pullback}, are the $\phi_m^{l_i}:\S1\to\B\Z_m$. The pullback of $\phi_n^{l_1,\ldots,l_k}$ along $\B i_m$ is thus $\phi_m^{l_1,\ldots,l_k}$. Similarly, by \cite[Theorem 25]{lens} and \cref{lens-pullback}, the vertical map $X\to L_n^{l_1,\ldots,l_k}$ is obtained as the join of $n$ instances of $s_p$. Finally, we obtain the pullback square
    \[
    \begin{tikzcd}[column sep=large]
       L_m^{l_1,\ldots,l_k}\ar[d,"s_p^{\join_{\B\Z_n}^k}"']\ar[r,"\phi_m^{l_1,\ldots,l_k}"]\ar[dr,phantom,pos=0,"\lrcorner"]&\B\Z_m\ar[d,"\B i_m"]\\
      L_n^{l_1,\ldots,l_k}\ar[r,"g_{L_n^{l_1,\ldots,l_k}}"']&\B\Z_n
    \end{tikzcd}
    \qedhere
  \]
\end{proof}

\subsection{Constructing the hypercubical manifold and the homology sphere}
We would like to illustrate here another kind of situation where the Galois fibration occurs when constructing types corresponding to well-known spaces, albeit in a somewhat hidden way. The general idea is the following. Suppose that we have a group~$G$ and we want to define a coherent action of~$G$ on a type~$A$, which is not supposed to be truncated (in particular, it might not be a set). This amounts to define a map $\psi:\B G\to\U$ such that $\psi\,\pt=A$, which requires eliminating to a type which is not a groupoid, and is thus difficult to perform directly. But we can use the action-fibration duality, which brings a fresh point of view on the problem. Indeed, by the Grothendieck duality, provided we can construct the homotopy quotient $\ol A$ of $A$ under the action of~$G$, constructing the action amounts to defining a map $\phi:\ol A\to\B G$. However, it is not clear which map this should be. When the type~$A$ is simply connected, we know that $\phi$ has to be the Galois fibration by \cref{galois-unique}.

\subparagraph{The hypercubical manifold.}
\ifx\authoranonymous\relax It was\else We have\fi{} defined and studied in~\cite{hypercubical} a type corresponding to the \emph{hypercubical manifold}. Topologically, this manifold~$K$ was defined by Poincaré as a space obtained by identifying the opposite faces of a cube after a quarter-turn rotation:
\[
  \begin{tikzcd}[sep=small]
    &\ar[dl,"w"']a\ar[dd,dotted,"x",near start]\ar[rr,"y"]&&b\\
    b&&\ar[ll,"z",near start]a\ar[dd,"w",near start]\ar[ur,"x"]\\
    &b&&\ar[ll,dotted,"w",near end]\ar[dl,"y"]a\ar[uu,"z"']\\
    a\ar[uu,"y"]\ar[ur,dotted,"z"]\ar[rr,"x"']&&b
  \end{tikzcd}
\]
The fundamental group of this space is the quaternion group~$Q$ and its universal covering is the 3-sphere~$\S3$. Thus, $K$ can also be obtained as a quotient of~$\S3$ under~$Q$.

In homotopy type theory, a type corresponding to~$K$ is easily defined as a higher inductive type. However, in order to work with it and validate its construction, we need to show that it can be obtained as a quotient of $\S3$ under the action of~$Q$, \ie from a map $\B Q\to\U$ such that the image of~$\pt$ is $\S3$. However, such a map is difficult to construct directly using the elimination principle of $\B Q$ because $\S3$ is not $n$-truncated for any $n$. As explained above, we can instead adopt a fibrational point of view and construct a map $\phi:K\to\B Q$ whose fiber is~$\S3$, \ie a fiber sequence
\[
  \begin{tikzcd}
    \S3\ar[r]&K\ar[r,"\phi"]&\B Q
  \end{tikzcd}
\]
thus showing that $K$ is a quotient of~$\S3$ under an action of~$Q$.
Details can be found in~\cite{hypercubical}.

\subparagraph{The homology sphere.}
When investigating the notion of homology, it was not clear at first whether homology would be fine enough in order to characterize homotopy types. It turns out that this is not the case, which was first shown by Poincaré by introducing a space, the \emph{homology sphere} (also known as the \emph{Poincaré dodecahedral space}), which has the same homology type as the sphere $\S3$, but not the same homotopy type~\cite{poincare1904cinquieme}.


Following~\cite{analysis-situs-dodeca}, we define the homology sphere $D$ as a higher inductive type corresponding to a dodecaedron where each face is identified with the opposite one after a rotation of a fifth-turn. The classes of 0-, 1- and 2-cells respectively have 4, 3 and 2 elements, and edges are identified as indicated by the colors below:
\begin{center}
  \tdplotsetmaincoords{70}{70}
\begin{tikzpicture}[tdplot_main_coords,thick,scale=0.8]
\tikzmath{\x=(1+sqrt(5))/2;}
\coordinate (A) at (1,0,0);
\coordinate (B) at ({cos(72)},{sin(72)},0);
\coordinate (C) at ({-cos(36)},{sin(36)},0);
\coordinate (D) at ({-cos(36)},{-sin(36)},0);
\coordinate (E) at ({cos(72)},{-sin(72)},0);
\coordinate (F) at (\x,0,1);
\coordinate (G) at ({\x*cos(36)},{\x*sin(36)},\x);
\coordinate (H) at ({\x*cos(72)},{\x*sin(72)},1);
\coordinate (I) at ({-\x*cos(72)},{\x*sin(72)},\x);
\coordinate (J) at ({-\x*cos(36)},{\x*sin(36)},1);
\coordinate (K) at (-\x,0,\x);
\coordinate (L) at ({-\x*cos(36)},{-\x*sin(36)},1);
\coordinate (M) at ({-\x*cos(72)},{-\x*sin(72)},\x);
\coordinate (N) at ({\x*cos(72)},{-\x*sin(72)},1);
\coordinate (O) at ({\x*cos(36)},{-\x*sin(36)},\x);
\coordinate (P) at (-1,0,\x+1);
\coordinate (Q) at ({-cos(72)},{-sin(72)},\x+1);
\coordinate (R) at ({cos(36)},{-sin(36)},\x+1);
\coordinate (S) at ({cos(36)},{sin(36)},\x+1);
\coordinate (T) at ({-cos(72)},{sin(72)},\x+1);

\draw[blue,dotted]
(L) -- (K)
;
\draw[brown,dotted]
(L) -- (D)
;
\draw[cyan,dotted]
(C) -- (D)
;
\draw[green,dotted]
(J) -- (C)
;
\draw[magenta,dotted]
(K) -- (J)
;
\draw[olive,dotted]
(M) -- (L)
;
\draw[orange,dotted]
(D) -- (E)
;
\draw[red,dotted]
(C) -- (B)
;
\draw[black,dotted]
(J) -- (I)
;
\draw[teal,dotted]
(P) -- (K)
;

\draw[blue]
(O) -- (R)
(B) -- (H)
;
\draw[brown]
(O) -- (F)
(T) -- (I)
;
\draw[cyan]
(F) -- (G)
(P) -- (Q)
;
\draw[green]
(S) -- (G)
(M) -- (N)
;
\draw[magenta]
(R) -- (S)
(E) -- (A)
;
\draw[olive]
(A) -- (B)
(S) -- (T)
;
\draw[orange]
(Q) -- (R)
(I) -- (H)
;
\draw[red]
(P) -- (T)
(O) -- (N)
;
\draw[black]
(Q) -- (M)
(A) -- (F)
;
\draw[teal]
(N) -- (E)
(G) -- (H)
;
\end{tikzpicture}

\end{center}
The resulting space has one 3-cell, 6 2-cells, 10 1-cells and 5 0-cells. The fundamental group~$\pi_1(D)$ is the group of order 120, known as the \emph{binary icosahedral group}, which can be presented as
$
\pres{r,s,t}{r^2=s^3=t^5=rst}
$. 
To exhibit $D$ as a quotient of the $3$-sphere under the action of the binary icosagedral group, we need to show that we have a fibration
\[
  \begin{tikzcd}
    \S3\ar[r]&D\ar[r,"\phi"]&\B\pi_1(D)
  \end{tikzcd}
\]
As for the hypercubical manifold, we can build a model for the fundamental group of $D$ in two steps: we first construct the fundamental groupoid (whose 0-, 1- and 2-cells are generated by the 0-, 1- and 2-cells involved in the description of~$D$ as a cellular complex), and then we contract 1-cells in order to obtain a type with only one 0-cell, which is thus a delooping of a group, \ie $\B \pi_1(D)$. The map~$\phi$ is then the canonical one induced by this process. Finally, since~$D$ is constructed by attaching cells, \ie has a canonical description as a colimit, we can use the flattening lemma in order to compute its fiber, which we claim to be $\S3$. This will be detailed in subsequent works.

\section{Future work}
\label{conclusion}
We have defined and studied $n$-covering types, and formalized their classification for $n=0$. In the future, we would like to provide an explicit construction of covering types of many interesting and natural types (such as the hypercubical manifold and the homology sphere presented above). In passing, we would like to mention here that \ifx\authoranonymous\relax it was shown\else we show\fi{} in~\cite{generated,presented} that Cayley complexes are universal coverings (and Cayley graphs are universal $(-1)$-coverings). A natural question is also whether the Galois correspondence can be extended to higher coverings. Its exploration is left for future work.

\bibliography{papers}

\end{document}